\documentclass[a4paper, 10pt]{article}
\usepackage{geometry}

\usepackage{natbib}

\usepackage{filecontents}
\usepackage{float}

\usepackage[english]{babel}
\usepackage[utf8]{inputenc}
\usepackage[T1]{fontenc}

\usepackage{amsmath}
\usepackage{amssymb}

\usepackage{xcolor}
\usepackage{bbm}

\usepackage{enumerate}
\usepackage{scalerel} 
\usepackage{subfigure}

\usepackage{graphicx}

\usepackage{amsthm}
\newtheorem{theorem}{Theorem}
\numberwithin{theorem}{section}
\newtheorem{lemma}[theorem]{Lemma}
\newtheorem{prop}[theorem]{Proposition}

\newtheorem{corollary}[theorem]{Corollary}

\theoremstyle{definition}
\newtheorem{definition}[theorem]{Definition}
\newtheorem{example}[theorem]{Example}
\newtheorem{remark}[theorem]{Remark}

\usepackage{pgfplots}

\pgfplotsset{compat = newest}

\newcommand{\N}{\mathbb{N}}

\newcommand{\E}{\mathbb{E}}

\newcommand{\A}{\mathcal{A}}

\newcommand{\F}{\mathcal{F}}
\newcommand{\G}{\mathcal{G}}

\newcommand{\ind}[1]{\mathbf{1}_{#1}}

\newcommand{\Var}{\mathrm{Var}}

\renewcommand{\epsilon}{\varepsilon}

\renewcommand{\P}{\mathbb{P}}

\let\oldmarginpar\marginpar
\renewcommand\marginpar[1]{\-\oldmarginpar[\raggedleft\footnotesize #1]{\raggedright\footnotesize\color{red} #1}}
\usepackage{hyperref}

\usepackage{tikz}
\usepackage{subfigure}
\usepackage{pgfplots}
\pgfplotsset{compat=newest}

\title{Delta-Epsilon-Common Knowledge and Quantitative Agreement Theorems}

\author{Christina Pawlowitsch\footnote{University Paris--Panthéon--Assas, Laboratoire de Mathématique Économique, and Institut Léon Walras: christina.pawlowitsch@assas-universite.fr}, Stefan Schrott\footnote{University of Münster, Institute for Mathematical Stochastics: stefan.schrott@uni-muenster.de}, Daniel Toneian\footnote{University of Vienna, Faculty of Mathematics: daniel.toneian@univie.ac.at}}

\begin{document}
	\maketitle
	\begin{abstract}
		
	Aumann \cite{Au76} defined common knowledge mathematically and established his now famous Agreement Theorem. We present a novel approach to quantifying how close individuals are to commonly knowing events, $(\delta,\epsilon)$-common knowledge, which is defined for any (and not just countable) probability spaces, and provide quantitative versions of the key results in this field. Specifically, we do this for Aumann's \cite{Au76}  Agreement Theorem and Nielsen's \cite{Ni84} extension thereof to random variables, as well as for the setting in which posteriors are communicated back and forth between individuals. Our results apply in particular to noisy communication settings.
		\end{abstract}
	
	\section{Introduction}
	
Fifty years after its publication, Aumann's \cite{Au76} Agreement Theorem and the notion of \emph{common knowledge}  introduced therein have not ceased to intrigue and inspire (see, for instance, recent contributions by Billot and Vergopoulos~\cite{BiVe26}, Di Tillio et al.~\cite{DiLeSa22}, Gizatulina and Hellman~\cite{GiHe19}, Hellman and Pinter~\cite{HePi22}, Geanakoplos and Polemarchakis~\cite{GePo23}, Gonczarowski and Moses~\cite{GoMo24}). The critique that in real-life scenarios, we are often closer to something like ``almost'' common knowledge (Halpern~\cite{Ha86}, Halpern and Moses~\cite{HaMo90}, Fagin et al.~\cite{FaHaMoVa04}, Rubinstein~\cite{Ru89}) has led theorists to investigate notions of \emph{common belief}. Monderer and Samet~\cite{MoSa89} show that under \emph{common $p$-belief}, an approximate agreement result holds. Geanakoplos~\cite{Ge94} and Morris~\cite{Mo99} extend this result to relaxed versions of common $p$-belief. These accounts, however, do not cover a different line of generalization of Aumann's theorem that consists in moving from countable to general probability spaces and from knowledge of an \emph{event} to knowledge of a \emph{random variable} (Nielsen~\cite{Ni84}). The present article brings these two strands together by introducing two novel notions of approximate common knowledge, formulated in the language of $\sigma$-algebras and applicable to general probability spaces.  
		 
The first proposed concept, \emph{$(\delta, \varepsilon)$-common knowledge of an event}, relies on two relaxations of elementary set-theoretic notions: (1) an event $B$ being \emph{$\delta$-nearly} contained in a $\sigma$-algebra, and (2) an event $B$ being \emph{$\varepsilon$-nearly} contained in another event $A$. We establish the following results:

\begin{itemize}
	
	\item [(1)] Equivalence of the $\sigma$-algebra-based definition of $(\delta, \varepsilon)$-common knowledge of an event 
    to both a \emph{hierarchical} and an \emph{alternating hierarchical} definition thereof, generalizing Aumann's~\cite{Au76} argument showing the equivalence of the partition-based definition and the informal alternating hierarchical definition of common knowledge (Proposition \ref{prop:equivalences_of_near_common_knowledge}). 
	
	\item [(2)] A generalization of Aumann's~\cite{Au76} Agreement Theorem, showing that when two individuals have $(\delta, \varepsilon)$-common knowledge of their posteriors of an event---or more generally, of a random variable taking values in the unit interval---then the distance between these posteriors is bounded as a function of $\delta$ and  $\varepsilon$ (Theorem \ref{theo:aumann_for_delta_eps_ck_schwach} and Theorem \ref{theo:aumann_for_delta_eps_ck}).
\end{itemize}
These results are related to those of Geanakoplos~\cite{Ge94} and Morris~\cite{Mo99} for weak common $p$-belief, which have been established for countable probability spaces. In contrast, all results of the present paper hold for general probability spaces. Section~3.3 provides a more detailed discussion of the relationship between $(\delta, \varepsilon)$-common and weak common $p$-belief, and shows in particular how one notion can be converted into the other (Proposition \ref{lemma:delta_to_p} and Proposition \ref{lemma:p_to_delta}). 

For the second proposed concept, which is central to this work, we extend Nielsen's~\cite{Ni84} notion of common knowledge of a \emph{random variable} to a notion of \emph{$(\delta, \varepsilon)$-common knowledge of a random variable}, formulated in terms of \emph{conditional variances} (Definition~\ref{def-X-deCK}). In this setting, we establish the following results: 

\begin{itemize}
	\item [(3)] Reformulations of $(\delta, \varepsilon)$-common knowledge of a random variable with both a \emph{hierarchical} and an \emph{alternating hierarchical} definition of \emph{$(\delta, \varepsilon)$-common knowledge of a random variable} (Proposition \ref{prop:equivalences_of_near_common_knowledge_X}).
	
	\item [(4)] An agreement theorem for $(\delta, \varepsilon)$-common knowledge of a \emph{random variable} which states that if the posteriors of a random variable $X$ are $(\delta,\epsilon)$-common knowledge on an event $B$, then the $L_2$-distance between the posteriors on $B$ is bounded as a function of $\delta$ and $\epsilon$ (Theorem \ref{theo:localaumann}). 
	
	\item [(5)] Finally, in extension of 
    Geanakoplos and Polemarchakis~\cite{GePo82} and Nielsen~\cite{Ni84}, a dynamic approximate agreement result (Theorem~\ref{theo:dialoge_epsilon}), which shows that, when individuals keep learning information of a random variable $X$, it suffices to merely approach $\epsilon$-common information (Definition~\ref{def-X-CK}) to guarantee that $\epsilon$-common information of $X$ is reached at infinity. We illustrate this by a scenario of communication with noise (Example \ref{ex:com_noise}).
	
\end{itemize}

The versatility of $(\delta, \varepsilon$)-common knowledge rests on a conceptual generalization: instead of defining $(\delta, \varepsilon)$-knowledge and common knowledge of an event \emph{at a particular state} $\omega$, we define it \emph{on an event}. 
We begin in Section 2 by motivating this approach and reformulating Aumann's~\cite{Au76} results within this framework.  

\section{Preliminaries: knowledge ``on'' an event}

Aumann uses partitions to model individuals' knowledge and common knowledge and states his---now iconic---theorem in that language.  As shown by Nielsen~\cite{Ni84}, the result can be generalized to a framework where individuals' knowledge is modeled by $\sigma$-algebras. In this more general framework, Nielsen extends the notion of common knowledge of an \emph{event} to common knowledge of a \emph{random variable} and shows an agreement result for knowing a random variable. We follow Nielsen in this approach, but generalize it one step further to accommodate nearby knowledge. 

Nielsen utilizes the notion (and existence) of the largest set on which an individual knows an event. For our relaxation of knowledge and common knowledge, this approach is not viable, as there are typically many different events of the same size that provide common knowledge of an event up to a given margin. To address this, we define what it means for an individual to know an event $A$ \emph{on an event} $B$.

\subsection{The basic setup}

We fix an ambient probability space $(\Omega, \F, \P)$, where $\Omega$ represents all possible states of the world. Individuals who reason about the world are each identified with a $\sigma$-algebra representing their knowledge of the world. An event is in an individual's $\sigma$-algebra precisely if the individual can discern whether that event occurs. 

We assume all $\sigma$-algebras to be complete, that is, every subset of every null set is an element of the $\sigma$-algebra. Furthermore, we treat sets that differ only on null sets as equivalent.

To avoid notational complexity, all definitions and results are stated for two individuals. The main results can all be reformulated to allow for more than two individuals and can be shown with virtually the same proofs as those presented here.

\begin{definition}\label{def:knowledge_on_B}
	Let $\G$ be a $\sigma$-algebra  and $A, B$  events, $\P(B)>0$. Then, we say that \emph{$\G$ knows $A$ on $B$} if $B \in \G$ and $B \subseteq A$. This property is denoted by $K(\G,A,B)$.
\end{definition}
With this, we can also define common knowledge on an event.

\begin{definition}\label{def:cn}

	Let $\G_1, \G_2$ be $\sigma$-algebras and $A, B$ events, $\P(B)>0$. Then, \emph{$A$ is common knowledge on $B$} if $B \in \G_1\cap \G_2$ and $B \subseteq A$. This property is denoted by $K(A,B).$ 

\end{definition}

\begin{remark}
	In the case that $\Omega$ satisfies $\P(\{ \omega \})>0$ for all $\omega \in \Omega$, this notion of common knowledge is essentially equivalent to Aumann's notion. If $A$ is common knowledge on an event $B$, then $A$ is common knowledge at every $\omega \in B$. Conversely, if $A$ is common knowledge on some $\omega$, then $A$ is common knowledge on $B$, where $B$ is the smallest event in $\G_1 \cap \G_2$ that contains $\omega$. 
\end{remark}

As is well known, it is possible to give an alternative characterization of $K(A,B)$ based on the informal notion originally suggested by Lewis~\cite{Le69}, according to which an event $A$ is common knowledge, in a group of people, if everyone knows that it occurs, everyone knows that everyone knows that it occurs, and so on.\footnote{See Lewis~\cite{Le69}, p. 52 and following.} 
 The following definition, which we refer to as the \emph{hierarchical} definition of common knowledge, formalizes this verbal description.

\begin{definition}\label{def-hierarchical-Lewis}
	Let $\G_1,\G_2$ be $\sigma$-algebras and $A,B$ events, $\P(B)>0$. Then, \emph{$A$ is hierarchically common knowledge on $B$} if there exist sequences $(C_n)_{n\in \N}$ and $(D_n)_{n\in \N}$ such that the following conditions hold for all $n \in \N$:
    \begin{itemize}
        \item $K(\G_1, A, C_n)$ and $K(\G_2,A,D_n)$,   
        \item $C_{n+1} \subseteq C_n \cap D_n$ and $D_{n+1} \subseteq C_n \cap D_n$, 
        \item $B=\bigcap_{n\in \N} \left( C_n \cap D_n \right)$.
    \end{itemize}
 
\end{definition}

\begin{remark}
       In the above, $C_1$ is an event on which individual 1 knows $A$; $D_1$ an event on which individual 2 knows $A$; $C_2$ an event on which individual 1 knows that both individuals know  $A$; $D_2$ an event on which individual 2 knows that both individuals know $A$; and so on.
\end{remark}

\begin{remark}
	Note that $\bigcap_{n\in \N} \left( C_n \cap D_n \right) = \bigcap_{n\in \N} C_n = \bigcap_{n\in \N} D_n $ and thus the condition  $B=\bigcap_{n\in \N} \left( C_n \cap D_n \right)$ could be equivalently posed as  $B=\bigcap_{n\in \N} C_n $ or  $B=\bigcap_{n\in \N} D_n$.
\end{remark}

For some generalizations of common knowledge it is useful to consider a variant of the hierarchical definition encapsulating the idea that an event is common knowledge between two individuals if individual $1$ knows that it occurs, individual $2$ knows that individual $1$ knows that it occurs, individual $1$ knows that individual $2$ knows that individual $1$ knows it occurs, and so on ad infinitum. This is the verbal description given by Aumann.

To distinguish it from the notion given in Definition \ref{def-hierarchical-Lewis}, we refer to it as the 
\emph{alternating hierarchical} definition of common knowledge.

\begin{definition}
	Let $\G_1,\G_2$ be $\sigma$-algebras and $A,B$ events, $\P(B)>0$. Then, \emph{$A$ is alternatingly hierarchically common knowledge on $B$} if there exists a sequence of events $(B_n)_{n\in \N}$ such that $B=\bigcap_{n\in \N} B_n$, $B_i \subseteq B_j$ for $j < i$, and $K(\G_1, A, B_n)$ if $n$ is odd and $K(\G_2,A,B_n)$ if $n$ is even.    
\end{definition} 

\begin{prop}(Aumann, Nielsen)\label{lemma:ck_in_two_ways}
	Let $\G_1,\G_2$ be $\sigma$-algebras, and $A,B$ events, $\P(B)>0$. Then, the following conditions are equivalent:
	\begin{itemize}
		\item [(1)]  $A$ is common knowledge on $B$.
		\item [(2)]  $A$ is hierarchically common knowledge on $B$.
		\item [(3)]  $A$ is alternatingly hierarchically common knowledge on $B$.
	\end{itemize}
\end{prop}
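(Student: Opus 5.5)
We will prove the cycle of implications (1)$\Rightarrow$(3)$\Rightarrow$(2)$\Rightarrow$(1), working throughout modulo null sets as the paper stipulates.

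For (1)$\Rightarrow$(3) and (1)$\Rightarrow$(2), constant sequences suffice. If $B\in\G_1\cap\G_2$ and $B\subseteq A$, set $B_n:=B$ for all $n$. Then $K(\G_1,A,B_n)$ and $K(\G_2,A,B_n)$ hold, the sequence is trivially decreasing, and $\bigcap_n B_n=B$. The same choice $C_n=D_n=B$ also witnesses (2) directly.

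For (3)$\Rightarrow$(2), we are given a decreasing sequence $(B_n)$ with $B_n\in\G_1$ for odd $n$ and $B_n\in\G_2$ for even $n$, each contained in $A$, with $\bigcap_n B_n=B$. Set $C_n:=B_{2n-1}$ and $D_n:=B_{2n}$. Then $K(\G_1,A,C_n)$ and $K(\G_2,A,D_n)$ hold. Since the sequence decreases, $C_n\cap D_n=B_{2n}$, and both $C_{n+1}=B_{2n+1}$ and $D_{n+1}=B_{2n+2}$ lie in $B_{2n}$. Finally $\bigcap_n(C_n\cap D_n)=\bigcap_n B_{2n}=B$ by monotonicity. (The condition $\P(B_n)>0$ needed in $K$ is automatic, since $B_n\supseteq B$ and $\P(B)>0$.)

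The substantive direction is (2)$\Rightarrow$(1). Given $(C_n),(D_n)$ as in Definition~\ref{def-hierarchical-Lewis}, we have $B\subseteq C_1\subseteq A$, so only $B\in\G_1\cap\G_2$ needs proof. The nesting gives $C_{n+1}\subseteq D_n$ and $D_{n+1}\subseteq C_n$, hence $C_{n+1}\subseteq C_n$ and $D_{n+1}\subseteq D_n$, so both sequences decrease. By the preceding remark,
\[
\bigcap_n C_n=\bigcap_n D_n=\bigcap_n (C_n\cap D_n)=B.
\]
The middle equalities follow from $C_{n+1}\subseteq C_n\cap D_n$ and $D_{n+1}\subseteq C_n\cap D_n$. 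Now $B=\bigcap_n C_n$ is a countable intersection of sets in $\G_1$, so $B\in\G_1$. Likewise $B=\bigcap_n D_n\in\G_2$. Hence $B\in\G_1\cap\G_2$ and $K(A,B)$ holds.

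The only point requiring care is this last equality of intersections, i.e.\ that the interleaved nesting $C_{n+1}\subseteq C_n\cap D_n$ forces each single sequence to already capture $B$. This is what lets us apply $\sigma$-closure inside each $\G_i$ separately. Beyond that, the argument is routine bookkeeping, and completeness of the $\sigma$-algebras makes the identification of events up to null sets harmless.
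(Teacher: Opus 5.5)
Your proof is correct and is essentially the paper's argument. The paper obtains this proposition as the case $\delta=\epsilon=0$ of Proposition~\ref{prop:equivalences_of_near_common_knowledge}: constant sequences for the easy direction, reindexing between the single alternating sequence and the pair $(C_n),(D_n)$, and the fact that a decreasing intersection of $\G_i$-sets stays in $\G_i$. The differences are cosmetic: you run the middle implication as (3)$\Rightarrow$(2) and prove (2)$\Rightarrow$(1) directly by closure of $\G_i$ under countable intersections, whereas the paper goes (2)$\Rightarrow$(3)$\Rightarrow$(1) via Lemma~\ref{lemma:falling_sequence}, which at $\delta=\epsilon=0$ reduces to exactly your observation.
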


The proof of this equivalence is provided in a more general setting in Proposition~\ref{prop:equivalences_of_near_common_knowledge}.

In the language ``on $B$,'' Aumann's agreement theorem can be restated as follows.

\begin{theorem}{(Aumann)} \label{theo:Aumann}
	Let $\G_1,\G_2$ be $\sigma$-algebras, $A$ an event, and $q_1,q_2 \in [0,1]$. If the events $\{\P(A|\G_1)=q_1\}$, $\{ \P(A|\G_2)=q_2\}$ are common knowledge on some event $B$, $\P(B)>0$, then $q_1=q_2$.
\end{theorem}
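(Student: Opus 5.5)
The plan is to compute $\P(A \cap B)$ in two ways, once through each individual's conditional probability, and compare. Unpacking Definition~\ref{def:cn}, common knowledge of both events on $B$ means $B \in \G_1 \cap \G_2$ and
\[
B \subseteq \{\P(A|\G_1)=q_1\} \cap \{\P(A|\G_2)=q_2\}.
\]
Since sets are treated up to null sets, this inclusion holds up to a $\P$-null set. That is harmless, because all subsequent quantities are integrals.

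First use $B \in \G_1$. By the defining property of conditional expectation, $\P(A \cap B) = \E[\ind{B}\,\P(A|\G_1)]$. On $B$ the integrand $\P(A|\G_1)$ equals $q_1$ almost surely, so
\[
\E[\ind{B}\,\P(A|\G_1)] = q_1 \P(B).
\]
Second, apply the same argument with $B \in \G_2$ to get $\P(A\cap B) = q_2 \P(B)$. Equating the two expressions gives $q_1 \P(B) = q_2 \P(B)$. Dividing by $\P(B)>0$ yields $q_1 = q_2$.

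No step is a genuine obstacle. The only point needing care is that the version of $\P(A|\G_i)$ is defined only almost surely. This means the inclusion of $B$ in the level set must be read modulo null sets, and the completeness assumption on the $\sigma$-algebras guarantees that $B$, being equivalent to an element of $\G_1\cap\G_2$, is itself measurable with respect to both. Once that is noted, the tower property alone finishes the proof. In particular, neither Proposition~\ref{lemma:ck_in_two_ways} nor any hierarchical characterization is needed.
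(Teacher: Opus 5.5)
Your argument is correct, and it is essentially what the paper's proof of Theorem~\ref{theo:aumann_for_delta_eps_ck} reduces to when $\delta=\epsilon=0$ and $a_i=b_i=q_i$. The paper gives no separate proof of this statement and obtains it from Theorem~\ref{theo:aumann_for_delta_eps_ck_schwach}; in that case the auxiliary sets $C_1,C_2$ constructed in the proof both coincide with $B$ up to null sets, and identity~\eqref{align:Set_Aumann_3} is exactly your computation $\P(A\cap B)=\E[\ind{B}\,\P(A|\G_i)]=q_i\P(B)$. The paper's extra machinery (the lower bound on $\P(C_i|C_j)$ and Lemma~\ref{lemma:ungl_fuer_aumann}) is needed only because $C_1$ and $C_2$ differ when $\delta,\epsilon>0$, so your direct route is the natural one for the exact statement.
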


These definitions and results will serve as templates for the approximate versions of knowledge and common knowledge introduced in Sections 3 and 4.

\boldmath 
\section{ ${(\delta,\epsilon)}$-common knowledge of an event and agreement}
\unboldmath 

  This section introduces \emph{$(\delta,\epsilon)$-common knowledge} of an \emph{event} defined on the basis of $\sigma$-algebras, shows that this notion is equivalent to both a hierarchical and an alternating hierarchical definition, and extends Aumann's~\cite{Au76} result to that framework.

\boldmath
\subsection{$(\delta,\epsilon)$-common knowledge on an event}
\unboldmath
 
We begin by introducing a notion that captures that an event is almost an element of a $\sigma$-algebra.

\begin{definition}
	Let $\G$ be a $\sigma$-algebra and $B$ an event. Then, \emph{$B$ is $\delta$-nearly in $\G$} if there exists an event $G \in \G$ such that  $$ \P(B \triangle G)\leq  \P(B) \delta,$$
	where the symmetric difference between $B$ and $G$ is defined as
	$$B\triangle G := (B\setminus G) \cup (G \setminus B) .$$
	For this property, we write $B \in \G^\delta$.
\end{definition}

Note that  $A \in \G^0$ is equivalent to $A \in \G$ for any complete $\sigma$-algebra $\G$. Next, we define what it means for an event to be nearly contained in another event.

\begin{definition}	
	Let $A,B$ be events. Then, \emph{$B$ is $\epsilon$-nearly a subset of $A$} if $$\P(B\setminus A) \leq \P(B)\epsilon.$$
	For this property, we write $B \subseteq_\epsilon A$.
\end{definition}

Note that $\subseteq_\epsilon$ is not a transitive relation.

Given an event $A$ and a $\sigma$-algebra $\G$, for what follows, it is helpful to know if there is a smallest $\delta$ such that $A \in \G^\delta$. The following lemma establishes that such a $\delta$ always exists by a constructive argument explicitly determining it.

\begin{lemma}\label{lemma:smallest_delta}
	Let $\G$ be a  $\sigma$-algebra and $A$ an event. Define $G\in \G$ as $G:=\{\P(A|\G) \geq \frac{1}{2}\}$. Then, 
	$$\P(A \triangle G) = \inf_{\tilde{G} \in \G} \P(A \triangle \tilde{G} ).$$
	 
	 In particular, $\P(A \triangle G)/\P(A)$ is the smallest element of the set $\{\delta : A \in \G^\delta\}$.
\end{lemma}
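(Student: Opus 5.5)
The plan is to write $\P(A\triangle \tilde G)$ for an arbitrary $\tilde G\in\G$ as the expectation of a $\G$-measurable integrand, and then minimize that integrand pointwise. Set $p:=\P(A\mid\G)$, a $\G$-measurable random variable with values in $[0,1]$. For any $\tilde G\in\G$ we have
\[
\P(A\triangle\tilde G)=\P(A\setminus\tilde G)+\P(\tilde G\setminus A)=\E\bigl[\ind{A}\ind{\tilde G^c}\bigr]+\E\bigl[\ind{\tilde G}(1-\ind{A})\bigr].
\]
Since $\ind{\tilde G}$ is $\G$-measurable, the tower property gives
\[
\P(A\triangle\tilde G)=\E\bigl[p\,\ind{\tilde G^c}+(1-p)\,\ind{\tilde G}\bigr].
\]

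Next, I minimize the integrand pointwise. At each $\omega$ it equals $p(\omega)$ if $\omega\notin\tilde G$ and $1-p(\omega)$ if $\omega\in\tilde G$, so it is always at least $\min(p,1-p)$. Therefore
\[
\P(A\triangle\tilde G)\ge \E\bigl[\min(p,1-p)\bigr].
\]
For $G=\{p\ge \tfrac12\}$, which lies in $\G$ because $p$ is $\G$-measurable, the integrand equals exactly $\min(p,1-p)$. So the lower bound is attained, and $\P(A\triangle G)$ equals the infimum; the infimum is in fact a minimum. Versions of $p$ differ only on null sets, and completeness of $\G$ ensures that the choice of version does not matter.

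For the ``in particular'' part, first note that when $\P(A)=0$ the statement is trivial (or excluded), so assume $\P(A)>0$. By definition, $A\in\G^\delta$ holds exactly when some $\tilde G\in\G$ satisfies $\P(A\triangle\tilde G)\le\delta\P(A)$. Taking $\tilde G=G$ shows that $\delta_0:=\P(A\triangle G)/\P(A)$ belongs to the set. Conversely, any $\delta$ in the set satisfies $\delta\P(A)\ge\P(A\triangle\tilde G)\ge\P(A\triangle G)$ for some $\tilde G\in\G$, so $\delta\ge\delta_0$. Hence $\delta_0$ is the smallest element.

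There is no serious obstacle. The only point that needs care is justifying $\E[\ind{A}\ind{\tilde G^c}]=\E[p\,\ind{\tilde G^c}]$, which is the defining property of conditional expectation applied to the $\G$-measurable set $\tilde G^c$.
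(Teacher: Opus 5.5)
Your proof is correct and is essentially the paper's argument: both condition on $\G$ and use that $\P(A|\G)\geq\frac12$ holds exactly on $G$. The paper phrases this as an exchange comparison restricted to $\tilde G\setminus G$ and $G\setminus\tilde G$. You instead minimize the integrand $p\,\ind{\tilde G^\complement}+(1-p)\,\ind{\tilde G}$ pointwise over all of $\Omega$, which has the small bonus of identifying the minimal value as $\E[\min(p,1-p)]$ with $p=\P(A|\G)$.
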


\begin{proof}
We need to show that $\P(A \Delta G) \le \P(A \Delta \tilde{G})$ for every $\tilde{G} \in \G$. To that end, fix $\tilde{G} \in \G$. For any events $B,C$, the inequality $\P(B) \leq \P(C)$ is equivalent to $\P(B\setminus C) \leq \P(C\setminus B)$. Hence, $\P(A\triangle G) \leq \P(A \triangle \tilde{G})$
	holds if and only if the inequality
	\begin{align*}
		\P(A \cap (\tilde{G}\setminus G)) + \P(A^\complement \cap (G\setminus \tilde{G})) =	 \P((A \triangle G) \setminus(A\triangle \tilde{G}) ) \\
		 \leq \P((A \triangle \tilde{G}) \setminus(A\triangle G) ) = \P(A \cap (G\setminus \tilde{G})) + \P(A^\complement \cap (\tilde{G}\setminus G))
	\end{align*}
	holds.
	To prove this inequality, we show that 
	\begin{align}\label{align:minimal_delta_1}
		\P(A \cap (\tilde{G}\setminus G)) \leq   \P(A^\complement \cap (\tilde{G}\setminus G)) 
	\end{align}
	as well as 
	\begin{align}  \label{align:minimal_delta_2}
		\P(A^\complement \cap (G\setminus \tilde{G})) \leq  \P(A \cap (G\setminus \tilde{G})). \end{align}
	To prove inequality~\eqref{align:minimal_delta_1}, we calculate
	\begin{align}\label{align:minimal_delta_3}
	\P(A \cap (\tilde{G} \setminus G)| \G)= \ind{\tilde{G} \setminus G}  \P(A|\G) \leq \frac{1}{2} \ind{\tilde{G} \setminus G}  \leq \ind{\tilde{G} \setminus G} \P(A^\complement|\G) = \P(A^\complement \cap (\tilde{G} \setminus G)|\G), \end{align}
	where the inequalities follow directly from the definition of $G$. Now, taking expectations over inequality~\eqref{align:minimal_delta_3} proves inequality~\eqref{align:minimal_delta_1}. Inequality~\eqref{align:minimal_delta_2} can be shown by a similar argument. 
\end{proof}

After this preparation, we can formulate relaxed notions of knowledge and common knowledge.
\begin{definition}
	Let $\G$ be a $\sigma$-algebra, $A,B$ events, $\P(B)>0$, and $\delta,\epsilon \geq 0$. We say that \emph{$A$ is  $(\delta,\epsilon)$-known on $B$ by $\G$} if $B\in \G^\delta$ and $B \subseteq_\epsilon A$. We denote this property by $K_{\delta,\epsilon}(\G, A,B).$
\end{definition}

\begin{definition}
	Let $\G_1, \G_2$ be $\sigma$-algebras, $A, B$ events, $\P(B)>0$, and $\delta,\epsilon \geq 0$. Then, \emph{$A$ is $(\delta, \epsilon)$-common knowledge on $B$} if $B\in \G_1^\delta \cap \G_2^\delta$ and $B \subseteq_\epsilon A.$  We denote this property by $K_{\delta,\epsilon}( A,B)$.  
\end{definition}

\begin{remark}
	If $A$ is (common) knowledge on $B$, then $A$ is also $(0,0)$-(common) knowledge. For the converse, if $A$ is $(0,0)$-(common) knowledge on $B$, then it is also (common) knowledge on $B$. 
\end{remark}

\begin{remark}\label{remark: self-evident} An event $A$ is $(\delta, \epsilon)$-common knowledge on $B$ if and only if $K_{\delta,\epsilon}(\G_1, A,B)$ and $K_{\delta,\epsilon}(\G_2, A,B)$, which splits the notion of $(\delta, \epsilon)$-common knowledge into two conditions, each referring to only one individual at a time. This reflects a well-known property of classical common knowledge, when looking at it through the lens of \emph{self-evident events}: 
  An event $A$ is common knowledge at $\omega$ if there exists an event $B$, so that $\omega \in B$ and $B$ is self-evident to both individuals, that is, $B \in \G_i$ for both $i=1$ and $i=2$ (see, for instance, Geanakoplos \cite{Ge94}).
    \end{remark}

Next, we confirm that $(\delta,\epsilon)$-common knowledge can be equivalently formulated by both a hierarchical and an alternating hierarchical definition.

\begin{definition}
	Let $\G_1,\G_2$ be $\sigma$-algebras, $A,B$ events, $\P(B)>0$, and $\delta,\epsilon \geq 0$. Then, \emph{$A$ is hierarchically $(\delta, \epsilon)$-common knowledge on $B$} if there exist sequences $(C_n)_{n\in \N}$ and $(D_n)_{n\in \N}$ such that the following conditions hold for all $n \in \N$:
    \begin{itemize}
        \item $K_{\delta,\epsilon}(\G_1, A, C_n)$ and $K_{\delta,\epsilon}(\G_2,A,D_n)$,  
        \item $C_{n+1} \subseteq C_n \cap D_n$ and $D_{n+1} \subseteq C_n \cap D_n$, 
        \item $B=\bigcap_{n\in \N} \left( C_n \cap D_n \right)$.
    \end{itemize}
\end{definition}

\begin{definition}
	Let $\G_1,\G_2$ be $\sigma$-algebras, $A,B$ events, $\P(B)>0$, and $\delta,\epsilon \geq 0$. Then,  \emph{$A$ is alternatingly hierarchically $(\delta,\epsilon)$-common knowledge on $B$} if there exists a sequence of events $(B_n)_{n\in \N}$ such that $B=\bigcap_{n\in \N} B_n$,  $B_i \subseteq B_j$ for $j < i$, and $K_{\delta,\epsilon}(\G_1, A,B_n)$ if $n$ is odd and  $K_{\delta,\epsilon}(\G_2, A,B_n)$ if $n$ is even.
\end{definition}

To show that these notions of $(\delta,\epsilon)$-common knowledge agree, we start with a preparatory lemma.

\begin{lemma}\label{lemma:falling_sequence}
	Let $\G$ be a $\sigma$-algebra, $A$ an event, $(B_i)_{i\in \N}$ a sequence of events such that $B_i \subseteq B_j$ for $j<i$, and $\delta, \epsilon \geq 0.$ If for all $i\in \N$,
	$$ K_{\delta,\epsilon}(\G, A,B_i),$$
	then
	$$K_{\delta,\epsilon}\left(\G, A, \bigcap_{i\in \N} B_i\right).$$
\end{lemma}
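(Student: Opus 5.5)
The plan is to verify the two defining conditions of $K_{\delta,\epsilon}(\G,A,B)$ for $B:=\bigcap_{i\in\N}B_i$ separately, in each case passing to the limit along the decreasing sequence $(B_i)$. Throughout, $\P(B)>0$ is taken as given, as required by the definition of $K_{\delta,\epsilon}$. Continuity of $\P$ from above gives $\P(B_i)\downarrow \P(B)$.

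\textbf{The $\epsilon$-inclusion.} The sets $B_i\setminus A$ decrease to $B\setminus A$. Hence
$$\P(B\setminus A)=\lim_{i}\P(B_i\setminus A)\le \epsilon\lim_i \P(B_i)=\epsilon\,\P(B),$$
so $B\subseteq_\epsilon A$.

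\textbf{The $\delta$-near membership $B\in\G^\delta$.} This is the step I expect to be the main obstacle. The approximating sets $G_i\in\G$ for the $B_i$ need not converge to anything useful, and thresholding $\P(B_i\mid\G)$ at $\tfrac12$ is not continuous. To get around this, I would use Lemma~\ref{lemma:smallest_delta} to obtain a closed formula for the optimal approximation error. For an event $C$ write $p_C:=\P(C\mid\G)$ and $G_C:=\{p_C\ge \tfrac12\}$. Conditioning on $\G$ gives
$$\P(C\triangle G_C)=\E\big[\ind{G_C}(1-p_C)+\ind{G_C^\complement}\,p_C\big]=\E\big[\min(p_C,1-p_C)\big].$$
By the lemma, this equals $\inf_{\tilde G\in\G}\P(C\triangle\tilde G)$. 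Consequently, $C\in\G^\delta$ if and only if $\E[\min(p_C,1-p_C)]\le\delta\,\P(C)$.

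\textbf{Passing to the limit.} By monotone convergence for conditional expectations, $p_{B_i}\downarrow p_B$ almost surely. The map $x\mapsto\min(x,1-x)$ is continuous and bounded on $[0,1]$, so dominated convergence yields
$$\E[\min(p_B,1-p_B)]=\lim_i \E[\min(p_{B_i},1-p_{B_i})]\le \delta\lim_i\P(B_i)=\delta\,\P(B).$$
Here the inequality uses the hypothesis $B_i\in\G^\delta$ together with the characterization above. Therefore $B\in\G^\delta$, witnessed by $G=\{\P(B\mid\G)\ge\tfrac12\}$. Combined with the $\epsilon$-inclusion, this gives $K_{\delta,\epsilon}(\G,A,B)$.
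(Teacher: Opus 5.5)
Your proof is correct. The $\epsilon$-inclusion step is exactly the paper's argument, but the $\delta$-step takes a different route.

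The paper does not need the witnesses $G_i$ to converge. For a given $\tilde\epsilon>0$ it picks $n$ with $\P(B_n\setminus B)\le\tilde\epsilon$ and uses the witness $G_n$ for $B_n$ directly as an approximant of $B$. The triangle inequality gives $\P(B\triangle G_n)\le\P(B\triangle B_n)+\P(B_n\triangle G_n)\le\tilde\epsilon+\delta(\P(B)+\tilde\epsilon)$, hence $\inf_{G\in\G}\P(B\triangle G)\le\delta\P(B)$. Lemma~\ref{lemma:smallest_delta} is used only to know that this infimum is attained, so $B\in\G^\delta$. The obstacle you anticipated is therefore not really there: a single $G_n$ with $n$ large is already almost as good for $B$ as for $B_n$.

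The paper's argument is more elementary, using only continuity from above and the triangle inequality for $\P(\cdot\triangle\cdot)$. In effect it shows that the optimal error $C\mapsto\inf_{G\in\G}\P(C\triangle G)$ is $1$-Lipschitz in $\P(C\triangle C')$.

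Your argument extracts more from Lemma~\ref{lemma:smallest_delta}: the closed form $\E[\min(\P(C|\G),1-\P(C|\G))]$ for the optimal error, and the explicit witness $\{\P(B|\G)\ge\frac12\}$. The price is invoking conditional monotone convergence and dominated convergence. Your closed form would also give the same Lipschitz property directly, since $|\min(p,1-p)-\min(q,1-q)|\le|p-q|$ and $\E|\P(C|\G)-\P(C'|\G)|\le\P(C\triangle C')$.
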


\begin{proof}
	Let $B:= \bigcap_{i\in \N}B_i.$ We have to show that $B \in \G^\delta$ and $B \subseteq_\epsilon A.$
	
	We start with $B \in \G^\delta$. Let $\tilde{\epsilon} >0$ be given. As probability measures are continuous from above, we can find $N \in \N$, such that
	$$\P(B_n \triangle B) = \P(B_n)- \P(B) \leq \tilde{\epsilon},$$
	for all $n > N$.
	In particular, $\P(B_n) \leq \tilde{\epsilon} + \P(B)$ for such $n$.
	
	As $B_i \in \G^\delta$ for $i \in \N$, there exists $G_i \in \G$ such that $\P(B_i \triangle G_i) \leq \delta \P(B_i).$
	Using the triangle inequality, we find that for all $n > N$,
	\begin{align*}
		\P(B \triangle G_n) &\leq \P(B \triangle B_n) + \P(B_n \triangle G_n) \leq \tilde{\epsilon} + \delta \P(B_n) \\
		&\leq \tilde{\epsilon} + \delta (\tilde{\epsilon} + \P(B))= \delta \P(B) + \tilde{\epsilon}(1+\delta).
	\end{align*}
	As $\tilde{\epsilon}$ was arbitrarily small, we conclude $B \in \G^{\tilde{\delta}}$ for all $\tilde{\delta}> \delta$. Lemma~\ref{lemma:smallest_delta} then implies that also $B \in \G^\delta$.
	
	To show that $B \subseteq_\epsilon A$, we simply observe that for all $i \in \N$,
	$$
	\P(B_n \setminus A) \leq \epsilon \P(B_n), 
	$$
	and hence, by taking limits and by the continuity of probability measures from above, 
	\[\P(B\setminus A) \leq \epsilon \P(B). \qedhere \]
\end{proof}

\begin{prop}\label{prop:equivalences_of_near_common_knowledge}
	Let $\G_1,\G_2$ be $\sigma$-algebras, $A,B$ events, and $\delta, \epsilon \geq 0$. Then, the following conditions are equivalent:
	\begin{itemize}
		\item [(1)]  $A$ is $(\delta, \epsilon)$-common knowledge on $B$. 
		\item [(2)]  $A$ is hierarchically $(\delta, \epsilon)$-common knowledge on $B$. 
		\item [(3)]  $A$ is alternatingly hierarchically $(\delta, \epsilon)$-common knowledge on $B$.
	\end{itemize}
\end{prop}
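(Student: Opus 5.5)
I will prove the cycle of implications (1) $\Rightarrow$ (2) $\Rightarrow$ (3) $\Rightarrow$ (1), using Remark~\ref{remark: self-evident} to split $(\delta,\epsilon)$-common knowledge into the two individual conditions $K_{\delta,\epsilon}(\G_1,A,B)$ and $K_{\delta,\epsilon}(\G_2,A,B)$. The limit step will be handled by Lemma~\ref{lemma:falling_sequence}.

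For (1) $\Rightarrow$ (2), I take constant sequences $C_n = D_n = B$ for all $n$. Then $K_{\delta,\epsilon}(\G_1,A,C_n)$ and $K_{\delta,\epsilon}(\G_2,A,D_n)$ hold by Remark~\ref{remark: self-evident}. The inclusions $C_{n+1}\subseteq C_n\cap D_n$ and $D_{n+1}\subseteq C_n\cap D_n$ hold trivially, and $\bigcap_n (C_n\cap D_n) = B$.

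For (2) $\Rightarrow$ (3), I interleave the two sequences by setting $B_{2k-1} := C_k$ and $B_{2k} := D_{k+1}$ for $k \in \N$.
\begin{itemize}
\item The knowledge conditions hold: odd indices carry $K_{\delta,\epsilon}(\G_1,A,\cdot)$ and even indices carry $K_{\delta,\epsilon}(\G_2,A,\cdot)$.
\item Monotonicity holds: $B_{2k} = D_{k+1}\subseteq C_k\cap D_k \subseteq C_k = B_{2k-1}$, and $B_{2k+1} = C_{k+1}\subseteq C_k\cap D_k \subseteq D_k$. For the second chain I actually need $C_{k+1}\subseteq D_{k+1} = B_{2k}$, which is not given directly. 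I will therefore first modify the sequences to make them nested across individuals.
\end{itemize}
A cleaner choice is $B_{2k-1} := C_{2k}$ and $B_{2k} := D_{2k+1}$. Then
\[
B_{2k} = D_{2k+1}\subseteq C_{2k}\cap D_{2k}\subseteq C_{2k} = B_{2k-1},
\qquad
B_{2k+1} = C_{2k+2}\subseteq D_{2k+1}\cap C_{2k+1}\subseteq D_{2k+1} = B_{2k}.
\]
So the sequence is decreasing, and the knowledge conditions are inherited from $(C_n)$ and $(D_n)$. The intersection equals $B$ because $\bigcap_n B_n$ is sandwiched between $\bigcap_n (C_n\cap D_n)$ and $\bigcap_k C_{2k} = \bigcap_n C_n$. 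The remark following Definition~\ref{def-hierarchical-Lewis} transfers verbatim to this setting, so these intersections all coincide with $B$.

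For (3) $\Rightarrow$ (1), I restrict to the subsequences. The odd subsequence $(B_{2k-1})_k$ is decreasing with $K_{\delta,\epsilon}(\G_1,A,B_{2k-1})$ for every $k$. Because the full sequence is decreasing, $\bigcap_k B_{2k-1} = \bigcap_n B_n = B$. Lemma~\ref{lemma:falling_sequence} then gives $K_{\delta,\epsilon}(\G_1,A,B)$. The same argument on the even subsequence gives $K_{\delta,\epsilon}(\G_2,A,B)$, and Remark~\ref{remark: self-evident} yields (1).

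The only genuinely analytic step is this limit passage, and it is already packaged in Lemma~\ref{lemma:falling_sequence}. The main point of care is the index bookkeeping in (2) $\Rightarrow$ (3), namely choosing the interleaving so that the chain is truly decreasing. One implicit hypothesis also needs attention: $\P(B)>0$ must hold, and with it $\P(B_n)\ge\P(B)>0$, so that the notions are defined. I will assume this, as in the definitions.
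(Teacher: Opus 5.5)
Your proof is correct and follows the paper's argument: constant sequences $C_n=D_n=B$ for (1)$\Rightarrow$(2), interleaving the two sequences for (2)$\Rightarrow$(3), and Lemma~\ref{lemma:falling_sequence} applied to the odd and even subsequences for (3)$\Rightarrow$(1). The only difference is bookkeeping in (2)$\Rightarrow$(3): your shifted choice $B_{2k-1}=C_{2k}$, $B_{2k}=D_{2k+1}$ works, but the paper's unshifted choice ($B_n=C_n$ for odd $n$, $B_n=D_n$ for even $n$) already suffices, because $B_{n+1}\in\{C_{n+1},D_{n+1}\}$ is contained in $C_n\cap D_n\subseteq B_n$.
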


\begin{proof}
	To show that $A$ being $(\delta,\epsilon)$-common knowledge on $B$ implies $A$ being hierarchically  $(\delta,\epsilon)$-common knowledge on $B$, simply set $C_n:=D_n:=B$ for all $n \in \N$.
	
	To show that $A$ being hierarchically $(\delta,\epsilon)$-common knowledge on $B$ implies $A$ being alternatingly hierarchically  $(\delta,\epsilon)$-common knowledge on $B$, let $(C_n)_{n\in \N}$ and $(D_n)_{n\in \N}$ be given as in the definition of hierarchical $(\delta,\epsilon)$-common knowledge. Now, define the sequence of events $(B_n)_{n \in \N}$ as $B_n:=C_n$ for odd $n$ and $B_n:=D_n$ for even $n$. It is now straightforward to check that the sequence $(B_n)_{n \in \N}$ shows that $A$ is alternatingly hierarchically $(\delta,\epsilon)$-common knowledge on $B$.
	
	Lastly, we have to check that alternating hierarchical $(\delta,\epsilon)$-common knowledge implies $(\delta,\epsilon)$-common knowledge. For this, let $(B_n)_{n \in \N}$ be a sequence of events as in the definition of alternating hierarchical $(\delta,\epsilon)$-common knowledge. We introduce the monotonically decreasing sequence of events $C_n:=B_{2n+1}$ for, $n \in \N$, and note that $B=\bigcap_{i \in \N}C_i.$ By assumption, $K_{\delta,\epsilon}(\G_1,A,C_n)$ for all $n \in \N$, and hence, by Lemma~\ref{lemma:falling_sequence}, $K_{\delta,\epsilon}(\G_1,A,B)$. The property $K_{\delta,\epsilon}(\G_2,A,B)$ is proven similarly, using the sequence $(D_n)_{n\in \N}$ defined by $D_n:=B_{2n}$ instead of the sequence $(C_n)_{n \in \N}.$
\end{proof}

Note that we obtain Proposition~\ref{lemma:ck_in_two_ways} by setting $\delta=\epsilon=0$ in Proposition~\ref{prop:equivalences_of_near_common_knowledge}.

\boldmath
\subsection{ 
Aumann's agreement theorem under $(\delta,\epsilon)$-common knowledge}
\unboldmath

The aim of this section is to establish results that show the robustness of Aumann's Agreement Theorem. Our results show that even when common knowledge is weakened, substantial disagreement about the likelihood of an event is impossible.

Specifically, Theorem~\ref{theo:aumann_for_delta_eps_ck_schwach} establishes the following: 
If it is $(\delta,\epsilon)$-common knowledge that the two individuals' posterior probabilities of a given event lie in some intervals $[a_1,b_1]$ and $[a_2,b_2]$, respectively, then the difference between their posteriors is uniformly bounded in terms of $\delta$, $\epsilon$ and the widths of the intervals.

\begin{theorem}\label{theo:aumann_for_delta_eps_ck_schwach}
	Let $\G_1,\G_2$ be $\sigma$-algebras  and $C$ an \emph{event}.	
	Define 
	$$A:=\{\P(C|\G_1)\in [a_1,b_1] \} \cap  \{\P(C|\G_2)\in [a_2,b_2] \}$$
	for some $ a_1,b_1,a_2,b_2 \in [0,1]$.	
	If there exists an event $B$, $\P(B)>0$, and $\delta, \epsilon \in [0,1]$, $3\delta + 2\epsilon \leq 1$, such that $A$ is $(\delta,\epsilon)$-common knowledge on $B$, then there exist $q_1 \in [a_1,b_1]$ and $q_2 \in [a_2,b_2]$ such that 	
	\begin{align}\label{align:Set_Aumann_1}
		|q_1-q_2| \leq \frac{2\delta+ \epsilon}{1+\delta} \leq 2(\delta+\epsilon).
	\end{align}
	In particular, on $A$ we obtain 
	$$|\P(C|\G_1)-\P(C|\G_2)| \leq b_1-a_1+b_2-a_2+2(\delta+\epsilon).$$ 
	
\end{theorem}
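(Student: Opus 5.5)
The plan is to replace the near-containment of $B$ in $\G_1$ and $\G_2$ by genuine events $G_1' \in \G_1$, $G_2' \in \G_2$. On each $G_i'$ the posterior $\P(C|\G_i)$ lies in $[a_i,b_i]$, so the elementary conditional probability $q_i := \P(C\cap G_i')/\P(G_i')$ lies in $[a_i,b_i]$. We then compare $q_1$ and $q_2$ using that both $G_i'$ are close to $B$.

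\textbf{Construction.} Since $B\in\G_i^\delta$, pick $G_i\in\G_i$ with $\P(B\triangle G_i)\le\delta\P(B)$ (by Lemma~\ref{lemma:smallest_delta} the infimum is attained, though any witness suffices). Let $A_i:=\{\P(C|\G_i)\in[a_i,b_i]\}$, which is $\G_i$-measurable, and set $G_i':=G_i\cap A_i\in\G_i$. Then
\[
\P(C\cap G_i')=\E\bigl[\ind{G_i'}\P(C|\G_i)\bigr]\in\bigl[a_i\P(G_i'),\,b_i\P(G_i')\bigr],
\]
so $q_i\in[a_i,b_i]$, provided $\P(G_i')>0$. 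For the closeness to $B$, note that $G_i'\setminus B\subseteq G_i\setminus B$ and $B\setminus G_i'\subseteq (B\setminus G_i)\cup(B\setminus A)$, since $A\subseteq A_i$. Hence
\[
\P(G_i'\setminus B)+\P(B\setminus G_i')\le \delta\P(B)+\epsilon\P(B).
\]
I will keep the two contributions separate, namely the ``$\delta$-part'' $\P(B\triangle G_i)$ and the ``$\epsilon$-part'' $\P(B\setminus A)$, which is common to both individuals. Keeping them separate is where the asymmetric constant $2\delta+\epsilon$ comes from.

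\textbf{Comparison.} Let $I:=G_1'\cap G_2'$ and $c:=\P(C\cap I)/\P(I)$. Since $I\subseteq G_i'$, we have $|q_i-c|\le \P(G_i'\setminus I)/\P(G_i')$, which gives a bound on $|q_1-q_2|$ after one application of the triangle inequality. A slightly sharper route bounds directly
\[
q_1-q_2\le \frac{\P(C\cap I)+\P(G_1'\setminus I)}{\P(I)+\P(G_1'\setminus I)}-\frac{\P(C\cap I)}{\P(I)+\P(G_2'\setminus I)},
\]
and symmetrically for $q_2-q_1$. Write $p=\P(B)$. The quantities $\P(G_i'\setminus I)$ and $\P(I)$ are then controlled by $p$, by $x_i=\P(G_i\setminus B)$, by $y_i=\P(B\setminus G_i)$ with $x_i+y_i\le\delta p$, and by $e=\P(B\setminus A)\le\epsilon p$. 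The mass $e$ is removed from both $G_1'$ and $G_2'$ simultaneously, so it is counted only once. I then maximize the resulting expression over these parameters; the extremal configuration should be where each $G_i$ adds $\delta p$ of mass outside $B$, on which the two posteriors differ maximally. This should yield $(2\delta+\epsilon)/(1+\delta)$. The hypothesis $3\delta+2\epsilon\le1$ is used to guarantee $\P(I)\ge(1-2\delta-\epsilon)p>0$, and to ensure that the extremal point of this monotone rational expression lies at the boundary of the constraint region. The second inequality, $(2\delta+\epsilon)/(1+\delta)\le 2(\delta+\epsilon)$, is trivial.

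\textbf{Consequence and main obstacle.} The ``in particular'' statement follows immediately: on $A$ we have $|\P(C|\G_i)-q_i|\le b_i-a_i$, and the triangle inequality gives the claimed bound. The main obstacle is the optimization step, i.e.\ getting exactly the constant $(2\delta+\epsilon)/(1+\delta)$ rather than a cruder bound such as $2(\delta+\epsilon)/(1-2\delta-\epsilon)$. I would first try the direct two-sided bound above with the variables $x_i,y_i,e$, checking monotonicity in each variable separately. If the direct bound stays too crude, I would fall back on a cruder bound for the main estimate, which still gives $2(\delta+\epsilon)$ under the constraint.
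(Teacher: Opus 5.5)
Your setup matches the paper's: $G_i'=G_i\cap A_i$ plays the role of the paper's $C_i=A_i\cap B_i$, and your choice $A_i=\{\P(C|\G_i)\in[a_i,b_i]\}$ is a harmless, slightly more direct variant of the paper's $\{\P(A|\G_i)>0\}$. Your ``sharper route'' is exactly the paper's two applications of Lemma~\ref{lemma:ungl_fuer_aumann}, to $\ind{C}$ and to $1-\ind{C}$.

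The gap is that the step carrying all the quantitative content is never carried out, and both things you say about it are wrong. Write $\theta_i:=\P(I)/\P(G_i')$.
\begin{itemize}
\item Your guessed extremal configuration has both $G_i$ containing $B$ and adding $\delta p$ outside it. There $I\supseteq B\cap A$, so $\theta_j\ge(1-\epsilon)/(1+\delta)$. Your sharper bound is then at most $(\delta+\epsilon)/(1+\delta)$, because in that bound the excess masses of $G_1'$ and $G_2'$ do not add.
\item The fallback does not give $2(\delta+\epsilon)$. The bound $2(\delta+\epsilon)/(1-2\delta-\epsilon)$ is trivially larger. The triangle-inequality bound $(1-\theta_1)+(1-\theta_2)$ is also too weak: take $\epsilon=0$ and $G_i=B\setminus D_i$ with disjoint $D_i\subseteq B$, $\P(D_i)=\delta p$. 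Then $(1-\theta_1)+(1-\theta_2)=2\delta/(1-\delta)>2\delta$.
\end{itemize}
So as written, neither the main estimate nor the ``in particular'' part is proved.

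The missing step is a reduction to a single ratio. Your two-sided bound reads
\[
q_1-q_2\le \theta_1c+(1-\theta_1)-\theta_2c=(1-\theta_1)+c(\theta_1-\theta_2)\le 1-\min(\theta_1,\theta_2),
\]
since the middle expression is affine in $c\in[0,1]$. The same holds for $q_2-q_1$. So it suffices to show $\theta_j\ge\frac{1-\delta-\epsilon}{1+\delta}$ for each $j$, which is the paper's key inequality $\P(C_i|C_j)\ge\frac{1-\delta-\epsilon}{1+\delta}$.

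To prove it, fix $i\neq j$ and put $a:=\P(B\setminus G_j)/p$ and $b:=\P(G_j\setminus B)/p$, so $a+b\le\delta$.
\begin{itemize}
\item Using only the crude bound $\P(B\setminus G_i)\le\delta p$ for the other index, $\P(I)\ge\P(B\cap G_1\cap G_2\cap A)\ge p(1-\delta-\epsilon-a)$.
\item Also $\P(G_j')\le\P(G_j)=p(1-a+b)$.
\end{itemize}
The ratio $(1-\delta-\epsilon-a)/(1-a+b)$ decreases in $b$. Along $b=\delta-a$, its derivative in $a$ has numerator $1-3\delta-2\epsilon\ge 0$, so the minimum is at $a=0$, $b=\delta$. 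This is where $3\delta+2\epsilon\le 1$ is genuinely used.

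The computation also shows the extremal case is asymmetric. The other individual's set misses $\delta p$ of $B$, which shrinks $I$, while $G_j$ adds $\delta p$ outside $B$, which inflates $\P(G_j')$. These are the two $\delta$'s in $2\delta+\epsilon$. The $\epsilon$ enters only once, as you anticipated.
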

\begin{proof}
Theorem~\ref{theo:aumann_for_delta_eps_ck_schwach} is an easy consequence of the more general Theorem~\ref{theo:aumann_for_delta_eps_ck} below.     
\end{proof}

As the choice of $a_1=b_1$ and $a_2=b_2$ is admissible in Theorem~\ref{theo:aumann_for_delta_eps_ck_schwach}, considering events of the type $\{\P(C|\G) \in [a,b]\}$ is a generalization of the usual Aumann result which concerns events of the form $\{\P(C|\G)=q\}$. The motivation for this is twofold. First, in continuous models, events of the form $\{\P(C|\G)=q\}$ are typically null sets. Hence, such events are usually not $(\delta,\epsilon)$-common knowledge on any set $B$ with $\P(B)>0$, and Aumann-type theorems based on such conditions are therefore typically not meaningful in continuous settings.

Second, for $\delta=\epsilon=0$, Theorem~\ref{theo:aumann_for_delta_eps_ck_schwach} is in its own right an extension of Aumann’s agreement theorem. It covers the situation where the precise values of the posteriors of individuals 1 and 2 are not common knowledge, but it is however common knowledge that they are within certain ranges $[a_1,b_1]$ and $[a_2,b_2]$. In this case these ranges necessarily overlap, i.e. the individuals cannot disagree entirely. The following Corollary states this explicitly.

\begin{corollary}
    Let $\G_1,\G_2$ be $\sigma$-algebras and $C$ be an event.
    Define
    $$A:= \{\P (C | \G_1) \in [a_1,b_1] \} \cap \{ \P( C| \G_2) \in [a_2,b_2] \},$$
    for some $a_1,b_1,a_2,b_2 \in [0,1]$.
    If there exists an event $B$, $\P(B)>0$, such that $A$ is common knowledge on $B$, then 
    $[a_1,b_1] \cap [a_2,b_2] \neq \emptyset. $
    
    In particular, on $A$ we obtain
    $$|\P(C | \G_1) - \P(C|\G_2)| \leq b_1 -a_1 + b_2 +a_2.$$
\end{corollary}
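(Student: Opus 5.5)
The corollary is the special case $\delta=\epsilon=0$ of Theorem~\ref{theo:aumann_for_delta_eps_ck_schwach}, so the plan is to specialize that theorem. Since $3\cdot 0+2\cdot 0\le 1$, it applies. Common knowledge of $A$ on $B$ gives $B\in\G_1\cap\G_2$ and $B\subseteq A$. Hence $B\in\G_1^0\cap\G_2^0$ and $\P(B\setminus A)=0$, so $A$ is $(0,0)$-common knowledge on $B$, as in the remark following the definition of $(\delta,\epsilon)$-common knowledge. The theorem then yields $q_1\in[a_1,b_1]$ and $q_2\in[a_2,b_2]$ with $|q_1-q_2|\le 0$. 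Thus $q_1=q_2$ lies in $[a_1,b_1]\cap[a_2,b_2]$, so the intersection is nonempty.

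For the second claim, take $\omega\in A$. Then $\P(C|\G_1)(\omega)\in[a_1,b_1]$ and $\P(C|\G_2)(\omega)\in[a_2,b_2]$, and we have a common point $q$ in both intervals. By the triangle inequality,
\[
|\P(C|\G_1)-\P(C|\G_2)|\le |\P(C|\G_1)-q|+|q-\P(C|\G_2)|\le (b_1-a_1)+(b_2-a_2).
\]
This is the ``in particular'' bound of Theorem~\ref{theo:aumann_for_delta_eps_ck_schwach} with $\delta=\epsilon=0$. It also implies the bound stated in the corollary, because $b_2-a_2\le b_2+a_2$ when $a_2\ge 0$. The sign in the stated bound looks like a typo for $b_2-a_2$, and the argument proves the sharper inequality anyway.

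To avoid depending on the later Theorem~\ref{theo:aumann_for_delta_eps_ck}, I would also give a direct proof of the main step. Set $\mu:=\P(C\cap B)/\P(B)$. Since $B\in\G_1$ and $B\subseteq A$ up to null sets, we have
\[
\P(C\cap B)=\E[\ind{B}\P(C|\G_1)],
\]
and $\P(C|\G_1)\in[a_1,b_1]$ a.s.\ on $B$. Therefore $\mu\in[a_1,b_1]$. The same argument with $\G_2$ gives $\mu\in[a_2,b_2]$, so $\mu$ itself is the common point.

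The only delicate point is measure-theoretic: conditional expectations are defined only up to null sets. This is handled by the convention that sets differing by null sets are identified, together with completeness of the $\sigma$-algebras.
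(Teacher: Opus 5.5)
Your proof is correct and follows the paper, which proves the corollary simply by setting $\delta=\epsilon=0$ in Theorem~\ref{theo:aumann_for_delta_eps_ck_schwach}; your self-contained argument with $\mu=\P(C\cap B)/\P(B)$ is exactly what the paper's proof of Theorem~\ref{theo:aumann_for_delta_eps_ck} reduces to in that case, since there $C_1=C_2=B$ up to null sets. You are also right that $b_2+a_2$ in the stated bound should read $b_2-a_2$, and the printed version follows a fortiori because $a_2\geq 0$.
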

\begin{proof}
This is an immediate consequence of  Theorem~\ref{theo:aumann_for_delta_eps_ck_schwach} with $\delta=\epsilon=0$.
\end{proof}

\begin{remark}
	
	For \emph{countable} probability spaces, results similar to Theorem~\ref{theo:aumann_for_delta_eps_ck_schwach} have been shown by Monderer and Samet \cite{MoSa89} in terms of \emph{common $p$-belief} (bounds for which were improved by Neeman~\cite{Ne96}), and by Geanakoplos \cite{Ge94} and Morris \cite{Mo99} in terms of \emph{weak common $p$-belief}. Our results apply to \emph{general} probability spaces. A comparison of weak common $p$-belief to $(\delta,\epsilon)$-common knowledge of an event is provided in Section \ref{Sec-comparision}.
\end{remark}

Theorem~\ref{theo:aumann_for_delta_eps_ck_schwach}, which, as Aumann's theorem, is about the individuals' posteriors attributed to an \emph{event}, is a special case of the following more general agreement theorem about the individuals' posterior expectation of a \emph{random variable}.

\begin{theorem}\label{theo:aumann_for_delta_eps_ck}
	Let $\G_1,\G_2$ be $\sigma$-algebras  and $X$  a random variable taking values in $[0,1]$.
	Define 
	$$A:=\{\E[X|\G_1]\in [a_1,b_1] \} \cap  \{\E[X|\G_2]\in [a_2,b_2] \}$$
	for some real numbers $ a_1,b_1,a_2,b_2 \in [0,1]$.
	If there exists an event $B$, $\P(B)>0$, and $\delta, \epsilon \in [0,1]$, $3\delta + 2\epsilon \leq 1$,  such that $A$ is $(\delta,\epsilon)$-common knowledge on $B$, then there exist $q_1 \in [a_1,b_1]$ and $q_2 \in [a_2,b_2]$  such that 
	\begin{align}\label{align:Set_Aumann_1}
		|q_1-q_2| \leq \frac{2\delta+ \epsilon}{1+\delta} \leq 2 (\delta+\epsilon).
	\end{align}	
\end{theorem}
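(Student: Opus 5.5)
The plan is to compare two ways of computing $\E[X\ind{B}]$, one through $\G_1$ and one through $\G_2$, and show that each is close to an average of the respective posterior over a large part of $B$. This mirrors the classical proof of Theorem~\ref{theo:Aumann}, where $\E[X\ind{B}]=q_i\P(B)$ for both $i$.

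First, fix the objects. By $(\delta,\epsilon)$-common knowledge there are $G_1\in\G_1$ and $G_2\in\G_2$ with $\P(B\triangle G_i)\le\delta\P(B)$, and $\P(B\setminus A)\le\epsilon\P(B)$. By Lemma~\ref{lemma:smallest_delta} we may take $G_i=\{\P(B|\G_i)\ge\frac12\}$, though any such $G_i$ should suffice. Write $Y_i:=\E[X|\G_i]$. Since $G_i\in\G_i$, the tower property gives
\[
\E[X\ind{G_i}]=\E[Y_i\ind{G_i}].
\]
Next, split $G_i=(G_i\cap A)\cup(G_i\setminus A)$. On $G_i\cap A$ we have $Y_i\in[a_i,b_i]$. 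Hence, provided $\P(G_i\cap A)>0$, the number
\[
q_i:=\frac{\E[Y_i\ind{G_i\cap A}]}{\P(G_i\cap A)}
\]
lies in $[a_i,b_i]$. Positivity follows from $\P(G_i\cap A)\ge\P(B)(1-\delta-\epsilon)>0$, which uses $3\delta+2\epsilon\le1$. This gives
\[
\E[X\ind{G_i}]=q_i\P(G_i\cap A)+\E[Y_i\ind{G_i\setminus A}].
\]
Since $X,Y_i\in[0,1]$, I would bound all error terms by measures of sets:
\[
\bigl|\E[X\ind{G_i}]-\E[X\ind{B}]\bigr|\le\P(G_i\triangle B),\qquad
0\le\E[Y_i\ind{G_i\setminus A}]\le\P(G_i\setminus A)\le\P(G_i\setminus B)+\P(B\setminus A).
\]
Equating the two expressions for $\E[X\ind{B}]$ (from $i=1$ and $i=2$) and dividing by a common normalization then yields a bound on $|q_1-q_2|$ of the form $C(\delta+\epsilon)$.

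The main obstacle is obtaining the sharp constant $\frac{2\delta+\epsilon}{1+\delta}$ rather than a cruder one such as $\frac{4\delta+2\epsilon}{1-\delta-\epsilon}$. The naive subtraction above adds errors from both individuals and divides by a pessimistic denominator. To sharpen it, I would exploit that the errors are one-sided. Because $X\in[0,1]$, mass in $G_i\setminus B$ can only increase $\E[X\ind{G_i}]$ by at most its measure, and mass in $B\setminus G_i$ can only decrease it. Writing $\alpha_i=\P(G_i\setminus B)$ and $\beta_i=\P(B\setminus G_i)$ with $\alpha_i+\beta_i\le\delta\P(B)$, I would bound $q_1-q_2$ from above by choosing worst-case placements, and then optimize over $\alpha_i,\beta_i$ and the position of $B\setminus A$. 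I expect this to be a linear-fractional optimization whose extremum is $\frac{2\delta+\epsilon}{1+\delta}$, with the denominator $(1+\delta)\P(B)$ arising as the maximal size of $G_i$. If it does not fall out directly, I would instead normalize by $\P(G_i)$ rather than $\P(G_i\cap A)$, redefining $q_i$ as the $[a_i,b_i]$-clipped average of $Y_i$ over $G_i$. The condition $3\delta+2\epsilon\le1$ should ensure that the extremal configuration is admissible.

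The final inequality $\frac{2\delta+\epsilon}{1+\delta}\le2(\delta+\epsilon)$ is immediate, since $2\delta+\epsilon\le2(\delta+\epsilon)$ and $1+\delta\ge1$. Theorem~\ref{theo:aumann_for_delta_eps_ck_schwach} then follows by taking $X=\ind{C}$.
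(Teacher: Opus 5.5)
Your preliminary steps are correct: $q_i\in[a_i,b_i]$, $\P(G_i\cap A)\ge(1-\delta-\epsilon)\P(B)>0$, and $\E[X\ind{G_i}]=q_i\P(G_i\cap A)+\E[Y_i\ind{G_i\setminus A}]$ all hold. The naive subtraction does give a bound of the form $C(\delta+\epsilon)$. But the theorem asserts the specific constant $\frac{2\delta+\epsilon}{1+\delta}$, and for that step, which is the whole content of the theorem, you only announce a worst-case optimization you expect to come out right. With the ingredients you list, it does not.

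Measure everything in units of $\P(B)$. Suppose you use only $X,Y_i\in[0,1]$, the one-sided bounds $\E[X\ind{G_i}]-\E[X\ind B]\in[-\beta_i,\alpha_i]$, and $0\le\E[Y_i\ind{G_i\setminus A}]\le\P(G_i\setminus A)\le\alpha_i+\epsilon$. Then the following values satisfy all of these:
$\E[X\ind B]=1-\delta-\epsilon$, $\alpha_i=\delta$, $\beta_i=0$, $\P(G_i\setminus A)=\delta+\epsilon$, $\E[X\ind{G_1}]=1-\epsilon$, $\E[Y_1\ind{G_1\setminus A}]=0$, $\E[X\ind{G_2}]=1-\delta-\epsilon$, $\E[Y_2\ind{G_2\setminus A}]=\delta+\epsilon$.
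They give $q_1=1$ and $q_2=\frac{1-2\delta-2\epsilon}{1-\epsilon}$, so $q_1-q_2=\frac{2\delta+\epsilon}{1-\epsilon}$. This exceeds $\frac{2\delta+\epsilon}{1+\delta}$ whenever $\delta+\epsilon>0$.

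The configuration is not actually realizable. It forces $X=1$ a.s.\ on $\{Y_1=1\}\in\G_1$ and on $\{Y_2=1\}\in\G_2$, and these two events together cover $B$, contradicting $\E[X\ind B]=1-\delta-\epsilon$. But ruling it out uses $\G_i$-measurability, not measures of sets. The root problems are twofold. First, $G_i\cap A\notin\G_i$, so your $q_i$ is an average of $Y_i$ but not of $X$, and it carries the uncontrolled term $\E[Y_i\ind{G_i\setminus A}]$. Second, you compare with $B$, which is not nested in $G_i$, so the one-sided monotonicity you want to exploit has nothing to act on.

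The missing idea is to replace $A$ by a $\G_i$-measurable event on which $Y_i\in[a_i,b_i]$ still holds. The paper uses $A_i=\{\P(A|\G_i)>0\}$. It contains $A$ up to a null set, and it lies inside $\{Y_i\in[a_i,b_i]\}$ because that event is in $\G_i$ and contains $A$. Set $C_i=A_i\cap G_i\in\G_i$. The tower property then gives $q_i:=\E[X\ind{C_i}]/\P(C_i)=\E[Y_i\ind{C_i}]/\P(C_i)\in[a_i,b_i]$, with no error term.

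Next, compare each $q_i$ not with the average over $B$ but with the average $\mu$ of $X$ over the common core $C_1\cap C_2\subseteq C_i$. Since $X\ge0$ and $1-X\ge0$, Lemma~\ref{lemma:ungl_fuer_aumann} gives $q_i\ge\rho\mu$ and $1-q_i\ge\rho(1-\mu)$, where $\rho=\min_i\P(C_1\cap C_2|C_i)$. So both $q_1$ and $q_2$ lie in an interval of length $1-\rho$.

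Finally, $\P(C_1\cap C_2|C_j)\ge\frac{\P(B)(1-\delta-\epsilon)-\P(B\setminus G_j)}{\P(G_j)}$. Minimizing this linear-fractional expression under $\P(B\triangle G_j)\le\delta\P(B)$ gives $\frac{1-\delta-\epsilon}{1+\delta}$, and this is exactly where $3\delta+2\epsilon\le1$ is used. Hence $1-\rho\le\frac{2\delta+\epsilon}{1+\delta}$.
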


We give the proof for the more general Theorem~\ref{theo:aumann_for_delta_eps_ck}. For this, it is useful to isolate the following lemma, which is also an important step in the proof of Neeman's~\cite{Ne96} Aumann-type result.

\begin{lemma}\label{lemma:ungl_fuer_aumann}
	Let $Z$ be a random variable taking values in $[0,1]$ and $E,F$ events such that $\P(E\cap F) > 0$. Then,
	\begin{equation*}
		\frac{\E[Z \ind{E}]}{\P(E)} \geq \P(F|E) \frac{\E[Z \ind{E\cap F}]}{\P(E\cap F)}.
	\end{equation*}
\end{lemma}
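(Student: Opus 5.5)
The inequality follows from splitting the expectation over $E$ and discarding a nonnegative part. Since $Z \geq 0$ and $\ind{E} \geq \ind{E\cap F}$ pointwise, we have $Z\ind{E} \geq Z\ind{E\cap F}$, so monotonicity of expectation gives
\begin{equation*}
	\E[Z\ind{E}] \geq \E[Z\ind{E\cap F}].
\end{equation*}
This is the only genuinely probabilistic step. It uses only nonnegativity of $Z$; the upper bound $Z\leq 1$ is not needed here.

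Next we divide both sides by $\P(E)$, which is positive because $\P(E)\geq \P(E\cap F)>0$. We then rewrite the right-hand side by multiplying and dividing by $\P(E\cap F)>0$:
\begin{equation*}
	\frac{\E[Z\ind{E\cap F}]}{\P(E)} = \frac{\P(E\cap F)}{\P(E)}\cdot\frac{\E[Z\ind{E\cap F}]}{\P(E\cap F)} = \P(F|E)\,\frac{\E[Z\ind{E\cap F}]}{\P(E\cap F)}.
\end{equation*}
Here we use the elementary definition of conditional probability, $\P(F|E)=\P(E\cap F)/\P(E)$, which is well defined since $\P(E)>0$. Combining the two displays yields the claim.

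There is no real obstacle. The only point needing care is that every denominator is strictly positive, and this follows from the hypothesis $\P(E\cap F)>0$.
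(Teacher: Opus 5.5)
Your proof is correct and takes the same approach as the paper: use $Z \geq 0$ to bound $\E[Z\ind{E}] \geq \E[Z\ind{E\cap F}]$, then multiply and divide by $\P(E\cap F)$ to get the factor $\P(F|E)$. Your observations that only $Z \geq 0$ is needed (not $Z \leq 1$) and that every denominator is positive because $\P(E\cap F)>0$ are both accurate.
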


\begin{proof}
	We calculate
	\begin{equation*}
		\frac{\E[Z \ind{E}]}{\P(E)} \geq \frac{\E[Z \ind{E\cap F}]}{\P(E)} = \frac{\P(E\cap F)}{\P(E)} \frac{\E[Z \ind{E\cap F}]}{\P(E\cap F)}=\P(F|E) \frac{\E[Z \ind{E \cap F}]}{\P(E\cap F)}. \qedhere
	\end{equation*}
\end{proof}

\begin{proof}[Proof of Theorem~\ref{theo:aumann_for_delta_eps_ck}]
	For $i\in \{1,2\}$, let $A_i:= \{\P(A|\G_i) > 0\} \in \G_i$, $B_i \in \G_i$ such that $\P(B \triangle B_i) \leq \delta \P(B)$, and $ C_i:=A_i\cap B_i \in \G_i$.

	We want to apply Lemma~\ref{lemma:ungl_fuer_aumann}  to $X$ and the sets $E=C_i$ and $F=C_j$ for $i,j\in \{1,2\}$. We first give a lower bound for $\P(C_i|C_j)$:
	\begin{align*}
		\P(C_i|C_j) = & \frac{\P(C_i \cap C_j)}{\P(C_j)} = \frac{\P(A_i \cap B_i \cap A_j \cap B_j)}{\P(A_j \cap B_j)} \geq \frac{\P(B_i \cap B_j \cap A)}{\P(A_j \cap B_j)}\\
		&\geq \frac{\P(B_i\cap B_j \cap A)}{\P(B_j)} \geq  \frac{\P(B_i\cap B_j \cap A \cap B)}{\P(B_j)} \\ 
		&= \frac{\P(B_i\cap B_j \cap B) - \P(B_i \cap B_j \cap B \cap A^{\complement} )}{\P(B_j)} \\
		& \geq \frac{\P(B) - \P(B\setminus B_i) - \P(B\setminus B_j)- \P(B_i \cap B_j \cap B \cap A^{\complement} )}{\P(B_j)} \\
		& \geq \frac{\P(B) - \P(B\setminus B_i) - \P(B\setminus B_j)- \P( B \cap A^{\complement})}{\P(B_j)} \\
		&\geq \frac{\P(B) (1-\delta-\epsilon) - \P(B\setminus B_j)}{\P(B_j)}.
	\end{align*}
As $\P(B \triangle B_j) \leq \P(B) \delta$, we can  write $\P(B\setminus B_j) = a \P(B)$ and $\P(B_j\setminus B)=b \P(B)$ with $a+b \leq \delta$, and so  $\P(B_j)=\P(B) - \P(B\setminus B_j) + \P(B_j \setminus B) = \P(B) (1-a+b)$. An easy computation shows that, under the condition that $3\delta + 2 \epsilon \leq 1$,
	$$\frac{\P(B) (1-\delta-\epsilon) - \P(B\setminus B_j)}{\P(B_j)}=\frac{\P(B) (1-\delta-\epsilon-a)}{\P(B)(1-a+b)} =\frac{ (1-\delta-\epsilon-a)}{(1-a+b)}$$
	is minimal for $a=0, b=\delta$. Hence,

	\begin{align}\label{align:Set_Aumann_2}
	\P(C_i|C_j) \geq \frac{ 1-\delta-\epsilon }{1+\delta}.
	\end{align} 
	Note that this also implies that $\P(C_1 \cap C_2)>0$.
	Now, we confirm that $C_i \subseteq \{\E[X|\G_i] \in [a_i,b_i] \}$. As $A \subseteq  \{\E[X|\G_i] \in [a_i,b_i] \}$,
	\begin{align}\label{align:Set_Aumann_2.5}
	 \P(A|\G_i) \leq \P( \{\E[X|\G_i] \in [a_i,b_i] \}|\G_i) = \ind{ \{\E[X|\G_i] \in [a_i,b_i] \}},
	 \end{align}
	using that $ \{\E[X|\G_i] \in [a_i,b_i] \} \in \G_i$. On $C_i$, we know that $\P(A|\G_i) > 0$, inequality~\eqref{align:Set_Aumann_2.5} hence implies that $ \ind{\{\E[X|\G_i] \in [a_i,b_i] \}}=1$ on $C_i$, that is, $ \E[X|\G_i] \in [a_i,b_i]$ on $C_i$.
	
	Next, we calculate
	\begin{align}\label{align:Set_Aumann_3}
		q_i:= \frac{\E[X\ind{C_i}]}{\P(C_i)} = \frac{\E[\E[X\ind{C_i}|\G_i]]}{\P(C_i)} =  \frac{\E[\ind{C_i}\E[X|\G_i]]}{\P(C_i)} \in [a_i,b_i], 
	\end{align}
	where we used that $C_i \in \G_i$, the tower property,\footnote{If 
		$\mathcal{H}$ is a sub $\sigma$-algebra of $\G$, then $\E [\E [X \mid \G]\mid \mathcal{H} ] = \E [X \mid \mathcal{H}]$, see, for instance, Williams \cite[p.\ 88]{Wi91}.}  and that $\E[X|\G_i] \in [a_i,b_i]$  on $C_i$.
	
	Lemma~\ref{lemma:ungl_fuer_aumann}, together with \eqref{align:Set_Aumann_2} and \eqref{align:Set_Aumann_3}, yields
	$$q_i \geq \frac{1-\delta-\epsilon}{1+\delta} \frac{\E[X \ind{ C_1\cap C_2}]}{\P(C_1\cap C_2)}.$$
	Applying Lemma~\ref{lemma:ungl_fuer_aumann} now to $1-X$ and keeping $E=C_i$ and $F=C_j$, we get 
	\begin{align*}
	1-q_i  &=	\frac{\E[(1-X) \ind{C_i}]}{\P(C_i)} \geq \P(C_i|C_j) \frac{\E[(1-X) \ind{ C_1\cap C_2}]}{\P(C_1\cap C_2)} \\
	&\geq  \frac{1-\delta-\epsilon}{1+\delta}\frac{\E[(1-X) \ind{ C_1\cap C_2}]}{\P(C_1\cap C_2)} = \frac{1-\delta-\epsilon}{1+\delta} \left(1- \frac{\E[X \ind{ C_1\cap C_2}]}{\P(C_1\cap C_2)}\right)
	\end{align*} 
and hence an upper bound for $q_i$,
\begin{align*}
	&q_i \leq  1 - \frac{1-\delta-\epsilon}{1+\delta} +  \frac{1-\delta-\epsilon}{1+\delta} \frac{\E[X \ind{ C_1\cap C_2}]}{\P(C_1\cap C_2)} \\
	 &= \frac{1+\delta}{1+\delta} - \frac{1-\delta-\epsilon}{1+\delta} +  \frac{1-\delta-\epsilon}{1+\delta} \frac{\E[X \ind{ C_1\cap C_2}]}{\P(C_1\cap C_2)} = \frac{2\delta+\epsilon}{1+\delta} + \frac{1-\delta-\epsilon}{1+\delta} \frac{\E[X \ind{ C_1\cap C_2}]}{\P(C_1\cap C_2)}.
\end{align*}	
	We see that both the upper and the lower bounds are independent of $i$, and thus hold for both $q_1$ and $q_2$, which means that
	\[|q_1-q_2| \leq \frac{2\delta+\epsilon}{1+\delta}.\qedhere\]
\end{proof}

\boldmath
\subsection{Comparison of weak common $p$-belief and $(\delta,\epsilon)$-common knowledge of an  event}\label{Sec-comparision}
\unboldmath

Monderer and Samet~\cite{MoSa89} introduce the concept of \emph{common $p$-belief}, which Geanakoplos~\cite{Ge94} generalizes to \emph{weak $p$-common knowledge}, also referred to as  \emph{weak common $p$-belief}. We state the definition of  weak common $p$-belief here in a slightly adapted version from Morris~\cite[Proposition 9]{Mo99}, in our language ``on $B$.''

\begin{definition}
	Let $\G_1,\G_2$ be $\sigma$-algebras, $A,B$ events, $\P(B)>0$, and $p \in [0,1]$. The event $A$ is \emph{weak common $p$-belief on $B$} if $B$ can be written as $B=B_1\cap B_2$ with $B_i \in \G_i$ and
	\begin{itemize}
		\item [(1)] $\P(B_i|B_j) \geq p$
		\item [(2)] $\P(A|B_i) \geq p$,
	\end{itemize}
	for $i,j \in \{1,2\}$.
\end{definition}

Weak common $p$-belief is closely related to the notion of $(\delta,\epsilon)$-common knowledge of an event. In particular, whenever an event is  $(\delta,\epsilon)$-common knowledge with small $\delta$ and small $\epsilon$, then it is weak common $p$-belief with large $p$, and vice versa, as the following two propositions show.

\begin{prop}\label{lemma:delta_to_p}
	Let $\G_1,\G_2$ be $\sigma$-algebras, $A,B$ events, $\P(B)>0$, and $\delta,\epsilon \geq 0$ such that $\delta \leq 1/3$ and $2\epsilon + \delta \leq 1$. If $A$ is $(\delta, \epsilon)$-common knowledge on $B$ with $B_1 \in \G_1, B_2 \in \G_2$ such that $\P(B \triangle B_i) \leq \P(B) \delta$, for $i \in \{1,2\}$, then $A$ is weak common $\frac{1-\max(\delta,\epsilon)}{1+\delta}$-belief on  $B_1 \cap B_2$.
\end{prop}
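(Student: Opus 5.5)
The plan is to verify the two conditions in the definition of weak common $p$-belief directly for $p=\frac{1-\max(\delta,\epsilon)}{1+\delta}$. We take the given sets $B_1\in\G_1$, $B_2\in\G_2$ as the decomposition of $B_1\cap B_2$. The bookkeeping follows the proof of Theorem~\ref{theo:aumann_for_delta_eps_ck}. For $i\in\{1,2\}$ write
$$\P(B\setminus B_i)=a_i\P(B),\qquad \P(B_i\setminus B)=b_i\P(B),$$
with $a_i,b_i\ge 0$ and $a_i+b_i\le\delta$. Then $\P(B_i)=\P(B)(1-a_i+b_i)$. The whole argument reduces to lower-bounding ratios of the form $(1-\text{loss})/(1-a_j+b_j)$ and then minimizing over the admissible $(a_j,b_j)$. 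The two side conditions $\delta\le 1/3$ and $2\epsilon+\delta\le 1$ are used exactly to locate these minima at the boundary point $a_j=0$, $b_j=\delta$.

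\emph{Condition (1).} For $i,j\in\{1,2\}$ we have $\P(B_i\cap B_j)\ge \P(B)-\P(B\setminus B_i)-\P(B\setminus B_j)\ge \P(B)(1-\delta-a_j)$, using $a_i\le\delta$. (For $i=j$ the claim is trivial.) Hence
$$\P(B_i|B_j)\ge\frac{1-\delta-a_j}{1-a_j+b_j}\ge \frac{1-\delta-a_j}{1+\delta-2a_j},$$
by taking $b_j=\delta-a_j$ as the worst case. As a function of $a_j\in[0,\delta]$, the derivative of the right-hand side has the sign of $-(1+\delta-2a_j)+2(1-\delta-a_j)=1-3\delta\ge 0$. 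So the function is nondecreasing and the minimum is at $a_j=0$, giving $\P(B_i|B_j)\ge \frac{1-\delta}{1+\delta}\ge p$. This also shows $\P(B_1\cap B_2)>0$ whenever $\delta<1$, so conditioning is legitimate.

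\emph{Condition (2).} We have $\P(A\cap B_i)\ge \P(A\cap B)-\P(B\setminus B_i)\ge \P(B)(1-\epsilon-a_i)$, because $B\subseteq_\epsilon A$. Hence
$$\P(A|B_i)\ge\frac{1-\epsilon-a_i}{1+\delta-2a_i}.$$
Here the derivative in $a_i$ has the sign of $-(1+\delta-2a_i)+2(1-\epsilon-a_i)=1-\delta-2\epsilon\ge 0$. The minimum is again at $a_i=0$, which gives $\P(A|B_i)\ge\frac{1-\epsilon}{1+\delta}\ge p$.

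The only delicate point is this monotonicity/minimization step. Replacing $b_j$ by its maximal value $\delta-a_j$ is valid because the numerators are positive under the hypotheses. The sign computation is precisely where the conditions $\delta\le 1/3$ and $2\epsilon+\delta\le1$ enter. Everything else is the inclusion–exclusion estimate already used in the proof of Theorem~\ref{theo:aumann_for_delta_eps_ck}.
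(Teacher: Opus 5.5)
Your proposal is correct and follows essentially the paper's proof. It uses the same parametrization of $\P(B\setminus B_i)$ and $\P(B_i\setminus B)$ as multiples of $\P(B)$, the same inclusion--exclusion lower bounds, and the same extremal configuration (no missing mass, excess mass $\delta$), yielding $\frac{1-\delta}{1+\delta}$ and $\frac{1-\epsilon}{1+\delta}$. Your explicit derivative-sign computation (giving $1-3\delta\ge 0$ and $1-\delta-2\epsilon\ge 0$) and your check that $\P(B_1\cap B_2)>0$ fill in details the paper leaves as ``an easy calculation'' or implicit.
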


\begin{proof}
	As $\P(B \triangle B_i) \leq \delta \P(B)$, for $i \in \{1,2\}$, we can define non-negative numbers $a,b,c,d$ by 
	\begin{align*}
	&\P(B_1 \setminus B) = a \P(B),	\\
	&\P(B \setminus B_1) = b \P(B),	\\
	&\P(B_2 \setminus B) = c \P(B),	\\
	&\P(B \setminus B_2) = d \P(B),	
	\end{align*}
	with $a+b \leq \delta$ and $c+d \leq \delta$.
	We first prove $\P(B_2|B_1) \geq (1-\delta)/(1+\delta)$.  The inequality with $B_1$ and $B_2$ swapped then follows in the same way. We calculate
	\begin{align*}
		\P(B_2|B_1) &= \frac{\P(B_1\cap B_2)}{\P(B_1)} \geq \frac{\P(B_1\cap B_2 \cap B)}{\P(B)+\P(B_1 \setminus B)- \P(B \setminus B_1)} \\
		&\geq \frac{\P(B) -  \P(B\setminus B_1) -\P(B \setminus B_2)  }{\P(B)(1+a-b)} = \frac{\P(B) (1-b-d) }{\P(B)(1+a-b)} \\
		&= \frac{1-b-d}{1+a-b}.
	\end{align*}
	An easy calculation shows that $\frac{1-b-d}{1+a-b}$ is minimal for $a=d=\delta, b=c=0$, if $\delta \leq 1/3$. Hence,
	$$	\P(B_2|B_1) \geq \frac{1-\delta}{1+\delta}.$$
	Next, we show that $P(A|B_1) \geq (1-\epsilon)/(1+\delta)$. The inequality $P(A|B_2) \geq (1-\epsilon)/(1+\delta)$ then follows in the same way. We thus calculate 
	\begin{align*}
		\P(A|B_1) &= \frac{P(A\cap B_1)}{\P(B_1)} \geq \frac{\P(B \cap B_1 \cap A)}{\P(B) + \P(B_1 \setminus B) - \P(B \setminus B_1)} \\
		&\geq \frac{\P(B)- \P(B \setminus B_1) - \P(B \setminus A)}{\P(B) + \P(B_1 \setminus B) - \P(B \setminus B_1)} \geq \frac{\P(B)(1-b-\epsilon)}{\P(B)(1+a-b)}=\frac{1-b-\epsilon}{1+a-b}.
	\end{align*}
	Under the assumption that $2\epsilon + \delta \leq 1$, the expression $(1-b-\epsilon)/(1+a-b)$ is minimal for $a=\delta,b=0$, and hence
	\[\P(A|B_1) \geq \frac{1-\epsilon}{1+\delta}. \qedhere\]

\end{proof}

\begin{prop}\label{lemma:p_to_delta}
	Let $\G_1,\G_2$ be $\sigma$-algebras, $A$ an event, $B_1 \in \G_1, B_2 \in \G_2$ events such that $\P(B_1 \cap B_2)>0$, and $p \in [0,1]$. 
	If $A$ is weak common $p$-belief on $B:=B_1\cap B_2$, then $A$ is $(\frac{1-p}{p},\frac{1-p}{p})$-common knowledge on $B$.
\end{prop}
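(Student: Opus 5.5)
We have to check the two conditions of $(\delta,\epsilon)$-common knowledge for $\delta=\epsilon=\frac{1-p}{p}$ and $B=B_1\cap B_2$. These are $B\in\G_1^\delta\cap\G_2^\delta$ and $B\subseteq_\epsilon A$. Throughout we may assume $p>0$. Indeed, $p=0$ is excluded implicitly, since otherwise $\frac{1-p}{p}$ is undefined or infinite and the claim is trivial.

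\textbf{Near-membership.} We use $G:=B_i\in\G_i$ as the approximating set. Since $B=B_1\cap B_2\subseteq B_i$, we have
$$\P(B\triangle B_i)=\P(B_i\setminus B)=\P(B_i)-\P(B_1\cap B_2).$$
Condition (1) of weak common $p$-belief with $j\neq i$ gives $\P(B_1\cap B_2)\ge p\,\P(B_i)$, i.e.\ $\P(B_i)\le \P(B)/p$. Hence
$$\P(B\triangle B_i)\le \P(B)\Bigl(\tfrac1p-1\Bigr)=\tfrac{1-p}{p}\,\P(B).$$
This shows $B\in\G_i^{\delta}$ for $i=1,2$.

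\textbf{Near-inclusion.} We need $\P(B\setminus A)\le \frac{1-p}{p}\P(B)$. From condition (2) applied to $B_1$ we get $\P(B_1\setminus A)\le(1-p)\P(B_1)$. Since $B\subseteq B_1$,
$$\P(B\setminus A)\le \P(B_1\setminus A)\le (1-p)\P(B_1)\le (1-p)\frac{\P(B)}{p}.$$
The last step reuses the bound $\P(B_1)\le \P(B)/p$ from the first step. This gives $B\subseteq_\epsilon A$ with $\epsilon=\frac{1-p}{p}$.

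There is no real obstacle here. The only point to get right is the direction of the conditional probabilities. The bound $\P(B)\ge p\,\P(B_i)$ comes from $\P(B_j|B_i)\ge p$, i.e.\ conditioning on $B_i$, and this is the key estimate used in both steps. Note also that $\P(B)>0$ is given, as the definition of $(\delta,\epsilon)$-common knowledge requires.
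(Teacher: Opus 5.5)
Your proof is correct and follows the paper's argument essentially step for step. You use $B_i$ itself as the approximating set, rewrite $\P(B_j|B_i)\ge p$ as $\P(B_i)\le \P(B)/p$, and reuse that bound after estimating $\P(B\setminus A)\le \P(B_1\setminus A)\le (1-p)\P(B_1)$. Your remark on the degenerate case $p=0$ is a harmless addition that the paper leaves implicit.
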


\begin{proof}
	We first show that $$\P(B_1 \triangle B) \leq \P(B)  \frac{1-p}{p}.$$
	The analogous inequality for $B_2$ can be proved in the same way.
	First, observe that the assumption $\P(B_2|B_1)\geq p$ can be rewritten as
	\begin{align}\label{align:p_to_epsilon}
	 \P(B_1) \leq \P(B_1 \cap B_2)/p.
	\end{align}
	 Hence,
	\begin{align*}
		&\P(B \triangle B_1) = \P((B_1 \cap B_2) \triangle B_1)  = \P(B_1) - \P(B_1 \cap B_2) \\
		&\leq \frac{\P(B_1 \cap B_2)}{p} - \P(B_1 \cap B_2) = \P(B_1 \cap B_2) \frac{1-p}{p} =\P(B) \frac{1-p}{p}.
	\end{align*}
	What is left to prove is that $B \subseteq_\epsilon A.$ Note that $\P(A|B_1) \geq p$ is equivalent to $\P(A \cap B_1) \geq p \P(B_1)$, and thus
	\begin{align*}
		\P(B \setminus A) &= \P((B_1 \cap B_2) \setminus A) \leq \P(B_1 \setminus A) = \P(B_1)- \P(A\cap B_1)\\
		&\leq \P(B_1) - p\P(B_1) = \P(B_1) (1-p) \leq \P(B) \frac{1-p}{p},
		\end{align*} 
	 where we used inequality~\eqref{align:p_to_epsilon} in the last step.
\end{proof}

\begin{remark}
	Note that, while each of the two previous lemmas is sharp, alternating applications of these conversions between weak common $p$-belief and $(\delta,\epsilon)$-common knowledge result in worse constants. For instance, converting weak common $p$-belief back and forth leads to weak common $(2p-1)$-belief.
\end{remark}

This implies that the two concepts can be applied in the same settings. An appealing feature of $(\delta,\epsilon)$-common knowledge is the equivalence of the $\sigma$-algebra-based definition to the two hierarchical definitions (Proposition  \ref{prop:equivalences_of_near_common_knowledge}), generalizing a characteristic of common knowledge (Proposition \ref{lemma:ck_in_two_ways}), which does not seem to be available for (weak) common $p$-belief (see, Morris \cite{Mo99}). A further advantage of $(\delta,\epsilon)$-common knowledge is that it is not restricted to the framework of countable probability spaces, but is defined for arbitrary probability spaces. This can be useful for extending results about the role of common $p$-belief for equilibrium selection in Bayesian games to Bayesian games with infinite type spaces.

\boldmath
\section{Approximate common knowledge of a random variable and agreement}
\unboldmath

Nielsen \cite{Ni84} extends the concept of common knowledge to random variables. Following this approach, we introduce a notion of $(\delta,\epsilon)$-common knowledge for random variables. We begin by reviewing Nielsen's account.

 \subsection{Nielsen's extension of common knowledge to random variables}

Nielsen introduces the following notions of knowledge and common knowledge of a random variable.

\begin{definition}{(Nielsen)}\label{def:understanding2}
	Let $X$ be a random variable and $\G_1, \G_2$ $\sigma$-algebras. The event  $K(\G_i,X)$ that \emph{$\G_i$ knows $X$}, for $i \in \{1,2\}$, is defined as the largest set $A \in \G_i$ such that $X\ind{A}$ is $\G_i$-measurable.
	The event $K(X)$ that \emph{$X$ is common knowledge} is defined as $K(\G_1 \cap \G_2,X)$.
\end{definition}

With these notions, Nielsen shows the following generalization of Aumann's Agreement Theorem (\ref{theo:Aumann}).

\begin{theorem}{(Nielsen)}\label{theo:aumann_nielsen}
	Let $X$ be a random variable and $\G_1,\G_2$ $\sigma$-algebras. For $i\in \{1,2\}$ we have $\E[X|\G_i]=\E[X| \G_1 \cap \G_2]$ 
	almost surely on $K(\E[X|\G_i])$. In particular, $\E[X|\G_1]=\E[X|\G_2]$ almost surely on $K(\E[X|\G_1]) \cap K(\E[X|\G_2])$.
\end{theorem}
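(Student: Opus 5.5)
The plan is to show that $Y:=\E[X|\G_1]$ and $Z:=\E[X|\G_1\cap\G_2]$ agree almost surely on $K:=K(\E[X|\G_1])$. By Definition~\ref{def:understanding2}, $K \in \G_1\cap\G_2$ and $Y\ind{K}$ is $\G_1\cap\G_2$-measurable. The claim for $i=2$ is symmetric. The ``in particular'' part follows by intersecting the two almost-sure equalities on $K(\E[X|\G_1])\cap K(\E[X|\G_2])$, since both sides then equal $\E[X|\G_1\cap\G_2]$.

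For the core step I use the tower property. Since $K\in\G_1\cap\G_2\subseteq\G_1$,
\[
\ind{K}Z=\E[X\ind{K}|\G_1\cap\G_2]=\E\bigl[\E[X\ind{K}|\G_1]\,\big|\,\G_1\cap\G_2\bigr]=\E[\ind{K}Y|\G_1\cap\G_2].
\]
Because $\ind{K}Y$ is itself $\G_1\cap\G_2$-measurable, the right-hand side equals $\ind{K}Y$ almost surely. Hence $Y=Z$ almost surely on $K$.

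The main obstacle is integrability. If $X$ is only assumed to have a well-defined conditional expectation (for example $X\geq 0$ or $X\in L^1$), the identities above need care. For $X\in L^1$ they are standard, since $Y\ind{K}$ is integrable. In the general case I would truncate: replace $K$ by $K_n:=K\cap\{|Y|\le n\}$. This set lies in $\G_1\cap\G_2$ because $Y\ind{K}$ is $\G_1\cap\G_2$-measurable. I would run the argument on each $K_n$ and then let $n\to\infty$.

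A further point to check is the existence of the largest set in Definition~\ref{def:understanding2}, up to null sets. It follows by taking an essential supremum of a family that is closed under countable unions; completeness of the $\sigma$-algebras ensures that the resulting measurability is preserved.
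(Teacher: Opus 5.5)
Your proof is correct, and it is the standard argument for Nielsen's theorem. Since $K\in\G_1\cap\G_2$ and $Y\ind{K}$ is $\G_1\cap\G_2$-measurable, you get $\ind{K}\E[X|\G_1\cap\G_2]=\E[\ind{K}Y|\G_1\cap\G_2]=\ind{K}Y$.

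The paper itself gives no proof of this statement; it is quoted from Nielsen~\cite{Ni84}. The only route the paper offers is the remark after Theorem~\ref{theo:globalaumann}. There, for $X\in L_2(\Omega)$ whose posteriors are common information on all of $\Omega$, setting $\epsilon=0$ gives $\E[(\E[X|\G_1]-\E[X|\G_2])^2]=0$, via the self-adjointness identity of Lemma~\ref{lemma:hilbert} and Cauchy--Schwarz.

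Your localization by $\ind{K}$ and the tower property gets more out of less:
\begin{itemize}
\item it works on the event $K(\E[X|\G_i])$, not only when this event is $\Omega$;
\item it needs $X\in L_1$ rather than $L_2$;
\item it proves the first assertion (each posterior equals $\E[X|\G_1\cap\G_2]$ on $K(\E[X|\G_i])$), not just agreement of the two posteriors.
\end{itemize}
What the paper's Hilbert-space route buys instead is stability. The same computation bounds the squared $L_2$-distance of the posteriors by $2\sqrt{\epsilon\Var(X)}$ when the posteriors are only $\epsilon$-common information. An exact-measurability argument like yours cannot give that, and this quantitative version is what Theorems~\ref{theo:localaumann} and~\ref{theo:dialoge_epsilon} are built on.

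One small correction to your integrability remark. Truncating to $K_n=K\cap\{|Y|\le n\}$ is guaranteed to make $X\ind{K_n}$ integrable only when $X$ has a sign; for $X\ge 0$, $\E[X\ind{K_n}]=\E[Y\ind{K_n}]\le n$. For signed non-integrable $X$, a bound on $|Y|$ does not control $\E[|X|\ind{K_n}]$. You would instead truncate with the $\G_1\cap\G_2$-measurable sets $\{\E[|X|\,|\,\G_1\cap\G_2]\le n\}$, on which $X$ is integrable. This is peripheral, since everywhere else in the paper $X$ is integrable. Your essential-supremum argument for the existence of $K$ is fine.
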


As knowing a random variable $X$ is defined in terms of the measurability of $X$, we need to relax measurability to reach a relaxed notion of knowing $X$. To quantify how well $X$ is known by an individual with information $\G$---that is, how close it is to the $\G$-measurable posterior $\E[X|\G]$---we use the conditional variance $\Var(X|\G)= \E[(X-\E[X|\G])^2|\G]$. We begin by applying this approach to the special case of common knowledge of a random variable on the entire state space and establish an agreement result for this notion. Building on this, we introduce a relaxation of common knowledge of a random variable, $(\delta,\epsilon)$-common knowledge of a random variable, on an event $B$, and show that an approximate agreement result holds, which constitutes the main theorem of the section.

\boldmath
\subsection{Common information and $\epsilon$-common information}
\unboldmath

Nielsen introduces notions for the particular case that (common) knowledge of a random variable occurs on the entire state space.

\begin{definition}{(Nielsen)}\label{def:informed}
	Let $X$ be a random variable, $\G_1, \G_2$ $\sigma$-algebras and $i \in \{1,2\}$. If  $K(\G_i,X)=\Omega$,  
    then \emph{$\G_i$ is said to be informed about $X$}.
	If $K(X)=\Omega$, 
    then \emph{$X$ is said to be common information}.
\end{definition}

\begin{remark}\label{remark:common_information}
The property that a random variable $X$ is common information can equivalently be expressed through notions that only refer to a single individual, namely, that each individual is informed about $X$ (see also Remark \ref{remark: self-evident}).
\end{remark}

We begin by defining what it means for an individual to be $\epsilon$-informed about a random variable $X$ and what it means for $X$ to be $\epsilon$-common information.

\begin{definition}\label{def:common_information}
	Let $X \in  L_2(\Omega)$, $\G$ a $\sigma$-algebra, and $\epsilon \geq 0$. We say that \emph{ $\G$ is $\epsilon$-informed about $X$} if
	$$ \E[\Var(X|\G)] \leq \epsilon .$$	
\end{definition}

\begin{definition}\label{def-X-CK}
	
	Let $X\in L_2(\Omega)$ and $\G_1, \G_2$ $\sigma$-algebras. We say that \emph{$X$ is $\epsilon$-common information} if 
	$$\E[\Var(X|\G_i)] \leq \epsilon, $$
	for $i \in \{1,2\}$.
\end{definition}

Note that $0$-common information coincides with common information, as defined by Nielsen (Definition~\ref{def:informed}).

First, we show a preparatory lemma about the difference of posteriors.

\begin{lemma}\label{lemma:hilbert}
	Let $X \in L_2(\Omega)$ be a random variable and $\G_1,\G_2$ $\sigma$-algebras. 
	Then,
	\begin{align*}	
	 \E[(\E[X|\G_1]-\E[X|\G_2])^2] 
	\leq ||\E[X|\G_1]+ \E[X|\G_2]-\E[\E[X|\G_1]|\G_2]-\E[\E[X|\G_2]|\G_1]||_2 \cdot ||X||_2.\end{align*}
\end{lemma}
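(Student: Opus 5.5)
Write $Y_i:=\E[X|\G_i]$ and let $P_i$ denote the orthogonal projection $Z\mapsto \E[Z|\G_i]$ on $L_2(\Omega)$, so that $Y_i=P_iX$. The plan is to compute $\E[(Y_1-Y_2)^2]$ directly in the Hilbert space $L_2(\Omega)$. Then we rewrite it as a single inner product with $X$ and finish with Cauchy--Schwarz.

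Expanding the square gives
\[
\E[(Y_1-Y_2)^2]=\langle Y_1,Y_1\rangle+\langle Y_2,Y_2\rangle-2\langle Y_1,Y_2\rangle .
\]
We rewrite each term using that $P_i$ is self-adjoint and idempotent:
\begin{align*}
\langle Y_1,Y_1\rangle&=\langle P_1X,P_1X\rangle=\langle X,P_1X\rangle=\langle X,Y_1\rangle,\\
\langle Y_2,Y_2\rangle&=\langle X,Y_2\rangle.
\end{align*}
For the cross term we split it symmetrically:
\begin{align*}
2\langle Y_1,Y_2\rangle&=\langle P_1X,P_2X\rangle+\langle P_2X,P_1X\rangle\\
&=\langle X,P_1P_2X\rangle+\langle X,P_2P_1X\rangle\\
&=\langle X,\E[Y_2|\G_1]\rangle+\langle X,\E[Y_1|\G_2]\rangle .
\end{align*}
Here we move $P_1$, respectively $P_2$, across the inner product.

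Collecting terms gives the identity
\[
\E[(Y_1-Y_2)^2]=\bigl\langle X,\;Y_1+Y_2-\E[Y_1|\G_2]-\E[Y_2|\G_1]\bigr\rangle .
\]
Applying the Cauchy--Schwarz inequality in $L_2(\Omega)$ then yields the claimed bound.

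The only point that needs care is the self-adjointness of conditional expectation, $\E[Z\,\E[W|\G]]=\E[\E[Z|\G]\,W]$ for $Z,W\in L_2(\Omega)$. This follows from the tower property, by pulling out the $\G$-measurable factor. We also need that all the quantities involved lie in $L_2(\Omega)$, which holds because conditional expectation is an $L_2$-contraction. There is no real obstacle beyond choosing the symmetric split of the cross term, which is what produces the particular symmetric expression in the statement.
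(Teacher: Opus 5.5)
Your proposal is correct and follows the paper's proof: you expand the square, use self-adjointness (and idempotence) of conditional expectation to write each term as an inner product with $X$, split the cross term symmetrically, and conclude with Cauchy--Schwarz. Your identity $\E[(Y_1-Y_2)^2]=\langle X, Y_1+Y_2-\E[Y_1|\G_2]-\E[Y_2|\G_1]\rangle$ is the same one the paper derives.
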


\begin{proof}
	Using several times that conditional expectations are self-adjoint, that is, 
	$$\E[Y \cdot \E[Z|\G]] = \E[\E[Y|\G] \cdot Z],$$
	for any random variables $Y,Z \in L_2(\Omega)$ and $\sigma$-algebra $\G$,
	 we calculate
	\begin{align*}
		&\E[(\E[X|\G_1]-\E[X|\G_2])^2] = \E[\E[X|\G_1]^2] + \E[\E[X|\G_2]^2] - 2 \E[\E[X|\G_1]\E[X|\G_2]]\\
		&=\E[ \E[X|\G_1] \cdot X] + \E[\E[X|\G_2] \cdot X] - \E[ \E[\E[X|\G_1]|\G_2] \cdot X] - \E[ \E[\E[X|\G_2]|\G_1] \cdot X] \\
		&=\E[(\E[X|\G_1] +  \E[X|\G_2] -  \E[\E[X|\G_1]|\G_2] -  \E[\E[X|\G_2]|\G_1]) \cdot X] \\
		&\leq ||\E[X|\G_1]+ \E[X|\G_2]-\E[\E[X|\G_1]|\G_2]-\E[\E[X|\G_2]|\G_1]||_2 \cdot ||X||_2,
	\end{align*}
where the last inequality is due to the Cauchy–Schwarz inequality.
\end{proof}

We next establish an Aumann-type result for the particular case of $\epsilon$-common information, showing that if the posteriors of a random variable $X$ are $\epsilon$-common information, then the expected squared difference of posteriors is at most $2\sqrt{\epsilon\Var(X)}$.

\begin{theorem}\label{theo:globalaumann}
	Let $\G_1, \G_2$ be $\sigma$-algebras, $X\in L_2(\Omega)$, $\epsilon \geq 0$. If the posteriors $  \E[X|\G_1]$ and $\E[X|\G_2]$ are both $\epsilon$-common information, then  
	$$\E [ (\E[X|\G_1]-\E[X|\G_2])^2  ] \leq 2\sqrt{\epsilon \Var(X)}.$$
\end{theorem}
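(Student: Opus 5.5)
The plan is to reduce the statement directly to Lemma~\ref{lemma:hilbert}, after first centering $X$. Write $Y_i:=\E[X|\G_i]$ for $i\in\{1,2\}$. By Definition~\ref{def-X-CK}, the hypothesis that $Y_1$ and $Y_2$ are both $\epsilon$-common information gives $\E[\Var(Y_i|\G_j)]\le\epsilon$ for all $i,j\in\{1,2\}$. The cases $i=j$ hold trivially, because $Y_i$ is $\G_i$-measurable. The content of the hypothesis therefore lies in the two cross terms
$$\E[\Var(Y_1|\G_2)]\le\epsilon,\qquad \E[\Var(Y_2|\G_1)]\le\epsilon.$$

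The key identification is $\E[\Var(Y|\G)]=\E[\E[(Y-\E[Y|\G])^2|\G]]=\|Y-\E[Y|\G]\|_2^2$, which follows from the tower property. Hence
$$\|Y_1-\E[Y_1|\G_2]\|_2\le\sqrt{\epsilon}\quad\text{and}\quad \|Y_2-\E[Y_2|\G_1]\|_2\le\sqrt{\epsilon}.$$
The vector appearing in Lemma~\ref{lemma:hilbert} is exactly $(Y_1-\E[Y_1|\G_2])+(Y_2-\E[Y_2|\G_1])$. By Minkowski's inequality its $L_2$-norm is at most $2\sqrt{\epsilon}$, so Lemma~\ref{lemma:hilbert} yields
$$\E[(Y_1-Y_2)^2]\le 2\sqrt{\epsilon}\,\|X\|_2.$$

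To replace $\|X\|_2$ by $\sqrt{\Var(X)}$, apply the above argument to $\tilde X:=X-\E[X]$ instead of $X$. Since $\E[\tilde X|\G_i]=Y_i-\E[X]$, the difference of posteriors $Y_1-Y_2$ is unchanged. The conditional variances are also unchanged, because subtracting a constant does not affect $\Var(\cdot|\G_j)$. So the hypotheses hold for $\tilde X$, and $\|\tilde X\|_2=\sqrt{\Var(X)}$. This gives
$$\E[(\E[X|\G_1]-\E[X|\G_2])^2]\le 2\sqrt{\epsilon\Var(X)}.$$

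There is no serious obstacle here. The only points that need care are the interpretation of the hypothesis, namely that the cross conditional variances are the nontrivial ones, and the centering trick that turns $\|X\|_2$ into $\sqrt{\Var(X)}$.
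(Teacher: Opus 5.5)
Your proposal is correct and follows essentially the same route as the paper. The paper likewise centers $X$, applies Lemma~\ref{lemma:hilbert}, splits the resulting vector into the two cross terms by the triangle inequality, and bounds each norm by $\sqrt{\E[\Var(\cdot\,|\,\cdot)]}\le\sqrt{\epsilon}$; your explicit check that centering leaves both the posterior difference and the conditional variances unchanged fills in what the paper states only as ``without loss of generality.''
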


\begin{proof}
	Without loss of generality, we may assume that $\E[X]=0$. Using Lemma~\ref{lemma:hilbert} and the triangle inequality in $L_2(\Omega)$, we find
	\begin{align*}
		|| \E[X|\G_1]- \E[X|\G_2]||_2^2 & \leq ||\E[X|\G_1] -  \E[ \E[X|\G_1] |\G_2] + \E[X|\G_2] -  \E[ \E[X|\G_2] |\G_1]||_2 \cdot ||X||_2  \\
		&\leq (\sqrt{\E [\Var( \E[X|\G_1]|\G_2)]} + \sqrt{\E [\Var( \E[X|\G_2]|\G_1)]}\ ) \cdot \sqrt{\Var (X)}  \\
		&\leq 2 \sqrt{\epsilon \Var(X)}.\qedhere
	\end{align*}
    
\end{proof}

Note that for random variables in $L_2(\Omega)$ that are common information, setting $\epsilon=0$ in Theorem~\ref{theo:globalaumann} recovers Nielsen's theorem (Theorem~\ref{theo:aumann_nielsen}).

\vspace{10pt}

\boldmath
\subsection{$(\delta,\epsilon)$-common knowledge of a random variable}\label{section: delta-epsilon-X}
\unboldmath

In this subsection, we give a notion of approximate common knowledge of a random variable \emph{on} an event B, show that this notion allows different equivalent formulations, and establish an Aumann-type result for it. To that end, we recall the definition of the restriction of a probability space.

\medskip
Let $(\Omega, \F, \P)$ be a probability space and $B$ an event, $\P(B)>0$. The probability space  $(B, \F_B, \P_B)$ is called the \emph{restriction of  $(\Omega, \F, \P)$ to $B$}, where $\F_B := \{ A \cap B : A \in \F \}$ is the trace $\sigma$-algebra of $\F$ in $B$ and ${\P_B(A):=\P(A\cap B)/\P(B)}$, for any event $A$. We further denote the restriction of a random variable $X$ to $B$ by $X_{|B}$. When referring to an individual's $\sigma$-algebra $\G$ on $(B, \F_B, \P_B)$, we implicitly refer to the trace $\sigma$-algebra $\G_B$.   
\medskip

With this, we can define $(\delta,\epsilon)$-knowledge and $(\delta,\epsilon)$-common knowledge of a random variable $X$ on an event $B$.   

\begin{definition}
	Let $X\in L_\infty(\Omega)$, $\G$ a $\sigma$-algebra, $B$ an event, $\P(B)>0$, and $\delta, \epsilon \geq 0$. We say that \emph{$X$ is $(\delta,\epsilon)$-known on $B$ by $\G$} if $B \in \G^\delta$ and $\G_{B}$ is $\epsilon$-informed about $X_{|B}$ on the space $(B, \F_B, \P_B)$, that is, $\E_{\P_B}[\Var_{\P_B}(X_{|B}|\G_{B})]\leq \epsilon$.
	For this property, we write $K_{ \delta, \epsilon}(\G, X, B).$
\end{definition}

\begin{definition}\label{def-X-deCK}
	Let $X\in L_\infty(\Omega)$, $\G_1, \G_2$  $\sigma$-algebras, $B$ an event, $\P(B)>0$, and $\delta, \epsilon \geq 0$. We say that \emph{$X$ is $(\delta,\epsilon)$-common knowledge on $B$} if $K_{\delta,\epsilon}(\G_1, X, B)$ and $K_{\delta,\epsilon}(\G_2, X, B).$
	For this notion, we write $K_{\delta,\epsilon}( X, B).$
	
\end{definition}

\begin{remark}
	On $B$, we have $$\E_{\P_B}[X_{|B}|\G_{B}]=\E[X|\sigma(\G \cup \{B\}) ].$$
\end{remark}

To establish alternative hierarchical characterizations and an agreement theorem for $(\delta,\epsilon)$-common knowledge of a random variable, we need to define a distance between complete $\sigma$-algebras. We adopt a notion introduced by Rogge~\cite{Ro74}. A slightly different, equivalent metric was previously studied by Boylan~\cite{Bo71} and by Neveu~\cite{Ne72}.

\begin{definition}
	Let $(\Omega, \F, \P)$ be a probability space. The set of all complete sub-$\sigma$-algebras of $\F$ is endowed with the metric $d$ given by
	$$
    d(\G_1,\G_2):= \max \left\{ \sup_{G_1 \in \G_1} \inf_{G_2\in\G_2} \P(G_1 \triangle G_2) , \sup_{G_2\in \G_2} \inf_{G_1\in\G_1} \P(G_1\triangle G_2)\right\},
    $$
	for complete $\sigma$-algebras $\G_1,\G_2 \subseteq \F$.
\end{definition}

An important property of this distance is the following result of Rogge \cite{Ro74}.

\begin{lemma}{(Rogge)}\label{lemma:rogge}
	Let $X$ be a random variable taking values in $[0,1]$ and $\G_1, \G_2$ sub-$\sigma$-algebras of $\F$. Then,
	$$  || \E[X|\G_1]-\E[X|\G_2]||_2 	\leq \sqrt{2 d(\G_1,\G_2)(1-d(\G_1,\G_2))}.$$
\end{lemma}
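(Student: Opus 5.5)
The plan is to write $U:=\E[X|\G_1]$, $V:=\E[X|\G_2]$, $P_i:=\E[\,\cdot\,|\G_i]$ and $\rho:=d(\G_1,\G_2)$. First note $\rho\le 1/2$, because the choices $G_2\in\{\emptyset,\Omega\}$ already give $\inf_{G_2}\P(G_1\triangle G_2)\le \min(\P(G_1),1-\P(G_1))\le 1/2$. The argument has two steps. First, show that every $\G_1$-measurable $[0,1]$-valued function is within $L_2$-distance $\sqrt{\rho(1-\rho)}$ of $L_2(\G_2)$, and symmetrically with the roles of $\G_1,\G_2$ exchanged. Second, combine the two one-sided estimates into
\begin{equation*}
\|U-V\|_2^2\le \|U-P_2U\|_2^2+\|V-P_1V\|_2^2\le 2\rho(1-\rho).
\end{equation*}

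\textbf{Step 1 (indicators, then layer cake).} For $G\in\G_1$ put $p:=\P(G|\G_2)$, so that $\|\ind{G}-P_2\ind{G}\|_2^2=\E[p(1-p)]$. By Lemma~\ref{lemma:smallest_delta}, applied with $\G_2$ and the event $G$, the best approximation of $G$ by a $\G_2$-event is $\{p\ge 1/2\}$. Its error is $\E[\min(p,1-p)]$, so the definition of $d$ gives $\bar m:=\E[\min(p,1-p)]\le\rho$. Since $p(1-p)=m(1-m)$ with $m=\min(p,1-p)$, and $m\mapsto m(1-m)$ is concave and increasing on $[0,1/2]$, Jensen's inequality yields
\begin{equation*}
\E[p(1-p)]\le \bar m(1-\bar m)\le \rho(1-\rho).
\end{equation*}
For a general $\G_1$-measurable $U$ with values in $[0,1]$, use $U=\int_0^1\ind{\{U>t\}}\,dt$. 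Linearity of $P_2$ and Minkowski's integral inequality then give $\|U-P_2U\|_2\le\int_0^1\|\ind{\{U>t\}}-P_2\ind{\{U>t\}}\|_2\,dt\le\sqrt{\rho(1-\rho)}$. This is the only place where the hypothesis $X\in[0,1]$ is used, through $U,V\in[0,1]$. By symmetry, $\|V-P_1V\|_2\le\sqrt{\rho(1-\rho)}$ as well.

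\textbf{Step 2 (combining).} As in the proof of Lemma~\ref{lemma:hilbert}, self-adjointness gives $\E[UV]=\E[P_2U\cdot V]=\E[U\cdot P_1V]$, $\E[U^2]=\E[UX]$ and $\E[P_2U\cdot X]=\E[P_2U\cdot V]$. Expanding $\|U-V\|_2^2=(\E[U^2]-\E[UV])+(\E[V^2]-\E[UV])$ with these identities gives the exact decomposition
\begin{equation*}
\|U-V\|_2^2=\|U-P_2U\|_2^2+\|V-P_1V\|_2^2+\bigl(\|P_2U\|_2^2+\|P_1V\|_2^2-2\E[UV]\bigr).
\end{equation*}
So the stated bound follows from Step 1 once the bracketed cross term is shown to be nonpositive. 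In operator form, with $P=P_1$ and $Q=P_2$, this is the inequality $\langle (PQP+QPQ)X,X\rangle\le\langle (PQ+QP)X,X\rangle$. The cross term vanishes when $\G_1\subseteq\G_2$ and when $\G_2$ is trivial. For $X$ in the range of $P$ it reduces to $\|PQX\|_2\le\|QX\|_2$, which holds.

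\textbf{Main obstacle.} Proving nonpositivity of the cross term in general is the hardest step, and I am not certain it holds for arbitrary $X$. My first attempt would be to diagonalise the pair of projections $P,Q$ via Halmos' two-subspace theorem and check the $2\times 2$ blocks directly. If that fails, I would exploit that $X\in[0,1]$, not just $X\in L_2$, applying the layer-cake argument to $X$ itself and reducing to $X=\ind{A}$.

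If neither route works, there is a weaker fallback. Cauchy–Schwarz applied to $\|U-V\|_2^2=\langle U-V,(U-P_2U)-(V-P_1V)\rangle+\langle U-V,P_2U-P_1V\rangle$ controls the first summand by $2\sqrt{\rho(1-\rho)}\,\|U-V\|_2$. This gives Lemma~\ref{lemma:rogge} with constant $2$ in place of $\sqrt2$, and would need a separate argument for the sharp constant.
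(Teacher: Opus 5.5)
The paper gives no proof of this lemma (it is quoted from Rogge), so your argument has to stand on its own.

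Step~1 is fine. For $C\in\G_1$ and $p=\P(C|\G_2)$ you correctly get $\E[p(1-p)]\le\rho(1-\rho)$ from Lemma~\ref{lemma:smallest_delta} and Jensen.

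The gap is Step~2. The cross term $\|P_2U\|_2^2+\|P_1V\|_2^2-2\E[UV]$ is not nonpositive in general, so $\|U-V\|_2^2\le\|U-P_2U\|_2^2+\|V-P_1V\|_2^2$ is false, even for indicators. Take $\Omega=\{1,2,3\}$ with the uniform measure, $\G_1=\sigma(\{1\})$, $\G_2=\sigma(\{2\})$ and $X=\ind{\{1,2\}}$. Then $U=(1,\tfrac12,\tfrac12)$, $V=(\tfrac12,1,\tfrac12)$, $P_2U=(\tfrac34,\tfrac12,\tfrac34)$ and $P_1V=(\tfrac12,\tfrac34,\tfrac34)$. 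Hence $\|U-V\|_2^2=\tfrac16$, while $\|U-P_2U\|_2^2=\|V-P_1V\|_2^2=\tfrac1{24}$, so the cross term is $\tfrac1{12}>0$. The lemma itself holds here, with $\rho=\tfrac13$.

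None of your proposed rescues can work:
\begin{itemize}
\item In a generic Halmos block with angle $\theta\in(0,\pi/2)$, the operator $PQ+QP-PQP-QPQ$ has trace $0$ and determinant $-\cos^2\theta\sin^4\theta<0$. It is therefore indefinite, so that route would only confirm the failure.
\item The layer-cake reduction cannot transfer the sign of a quadratic form from indicators to general $X$, and the counterexample above is an indicator anyway.
\item In your fallback, the second summand $\langle U-V,P_2U-P_1V\rangle$ equals exactly the same cross term, so it is not $\le 0$ either.
\end{itemize}
More fundamentally, the one-sided errors of the two particular functions $U,V$ do not control $\|U-V\|_2$ up to any constant. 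Take $\P(b)=\P(c)=\eta$, $\P(a)=\P(d)=\tfrac12-\eta$, $\G_1=\sigma(\{a,b\})$, $\G_2=\sigma(\{a,c\})$ and $X=\ind{\{b\}}+\tfrac{\eta}{1/2-\eta}\ind{\{d\}}$. Then $U\equiv 2\eta$ and $\|U-V\|_2^2=4\eta^2$, but $\|V-P_1V\|_2^2=O(\eta^3)$ and $\|U-P_2U\|_2=0$.

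The missing idea is to use $0\le X\le 1$ conditionally. This gains a full power of $\rho$ where Cauchy--Schwarz loses a square root.
\begin{itemize}
\item Self-adjointness gives the exact identity $\|U-V\|_2^2=\E[U(X-V)]+\E[V(X-U)]$.
\item By layer cake, $\E[U(X-V)]=\int_0^1\E[\ind{\{U>t\}}(X-V)]\,dt$.
\item For $C\in\G_1$ and $p=\P(C|\G_2)$ one has $\E[\ind{C}X|\G_2]\le\min(p,V)$. Hence $\E[\ind{C}(X-V)|\G_2]\le\min(p,V)-pV\le p(1-p)$.
\item Your own Step~1 estimate then gives $\E[\ind{C}(X-V)]\le\E[p(1-p)]\le\rho(1-\rho)$.
\end{itemize}
Therefore $\E[U(X-V)]\le\rho(1-\rho)$. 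Symmetrically, using $\P(C|\G_1)$ for $C\in\G_2$, $\E[V(X-U)]\le\rho(1-\rho)$. This yields $\|U-V\|_2^2\le 2\rho(1-\rho)$, with the sharp constant. Bounding $\E[(\ind{C}-p)(X-V)]$ by Cauchy--Schwarz instead would only give $\|U-V\|_2^2\le\sqrt{\rho(1-\rho)}$.
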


This shows that both conditional expectations and conditional variances of given bounded random variables are continuous with respect to the $\sigma$-algebra. In particular, 
	\begin{align}\label{align:bedingte_varianz_ist_stetig}
	||\Var(X|\G_1)-\Var(X|\G_2)||_2 \leq 4\sqrt{2d(\G_1,\G_2) (1-d(\G_1,\G_2))},
	\end{align}
for every random variable $X$ taking values in $[0,1]$ and $\sigma$-algebras $\G_1,\G_2$.

We proceed by collecting results on $\sigma$-algebras augmented by an additional event, which we will rely on in the proof of the main theorem of this section.

\begin{definition}
	Let $\A$ be a collection of events. We denote the smallest $\sigma$-algebra that contains $\A$ by $\sigma(\A)$.
\end{definition}

\begin{lemma}\label{lemma:adding_set_to_sigma_algebra}
	Let $\G$ be a $\sigma$-algebra and $A$ an event. Then,
	
	$$\sigma(\G \cup \{A\})= \{(G \cap A) \cup (H \cap A^\complement) | G,H\in \G\}. $$
\end{lemma}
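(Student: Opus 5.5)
We prove the set identity $\sigma(\G \cup \{A\}) = \mathcal{S}$, where
$$\mathcal{S}:=\{(G \cap A) \cup (H \cap A^\complement) \mid G,H\in \G\},$$
by a double inclusion. For the inclusion $\mathcal{S}\subseteq\sigma(\G\cup\{A\})$, note that $G,H\in\G\subseteq\sigma(\G\cup\{A\})$ and $A,A^\complement\in\sigma(\G\cup\{A\})$. A $\sigma$-algebra is closed under finite intersections and unions, so every set $(G\cap A)\cup(H\cap A^\complement)$ lies in $\sigma(\G\cup\{A\})$.

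For the reverse inclusion, we use minimality of the generated $\sigma$-algebra. It suffices to show that $\mathcal{S}$ is itself a $\sigma$-algebra containing $\G\cup\{A\}$.

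\textbf{Containment.} Choosing $G=H$ gives $(G\cap A)\cup(G\cap A^\complement)=G$, so $\G\subseteq\mathcal{S}$. Choosing $G=\Omega$ and $H=\emptyset$, both in $\G$, gives $A\in\mathcal{S}$. This also shows $\Omega\in\mathcal{S}$.

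\textbf{Closure under complements.} Since $A$ and $A^\complement$ partition $\Omega$, the set $S=(G\cap A)\cup(H\cap A^\complement)$ restricts to $G$ on $A$ and to $H$ on $A^\complement$. Its complement therefore restricts to $G^\complement$ on $A$ and to $H^\complement$ on $A^\complement$, so
$$S^\complement=(G^\complement\cap A)\cup(H^\complement\cap A^\complement).$$
This set lies in $\mathcal{S}$ because $G^\complement,H^\complement\in\G$.

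\textbf{Closure under countable unions.} By the same partition argument, for $S_n=(G_n\cap A)\cup(H_n\cap A^\complement)$ we have
$$\bigcup_n S_n=\Bigl(\bigcup_n G_n\cap A\Bigr)\cup\Bigl(\bigcup_n H_n\cap A^\complement\Bigr).$$
This set lies in $\mathcal{S}$ because $\bigcup_n G_n$ and $\bigcup_n H_n$ belong to $\G$.

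Hence $\mathcal{S}$ is a $\sigma$-algebra containing $\G\cup\{A\}$, so $\sigma(\G\cup\{A\})\subseteq\mathcal{S}$, and together with the first inclusion this proves the identity.

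Completeness of the $\sigma$-algebras needs no separate treatment. If the completed version is meant, every null subset $N$ can be written as $(N\cap A)\cup(N\cap A^\complement)$, and $N\in\G$ by completeness of $\G$, so the representation persists. No step poses a real obstacle; the only care needed is the complement computation, which rests entirely on $A,A^\complement$ partitioning $\Omega$.
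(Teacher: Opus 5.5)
Your proof is correct and carries out the ``direct computation'' the paper leaves implicit. The right-hand side is trivially contained in $\sigma(\G\cup\{A\})$, and your check that it is a $\sigma$-algebra containing $\G\cup\{A\}$ (closure under complements and countable unions via the partition $A, A^\complement$) gives the reverse inclusion by minimality; your completeness remark also holds, since any $\sigma$-algebra containing the complete $\G$ already contains all null sets.
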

\begin{proof}
    The result is obtained by a direct computation.
\end{proof}

\begin{lemma}\label{lemma:distance_for_sigma_algebras}
	Let $\G$ be a $\sigma$-algebra, $A$ an event, and $\epsilon \geq 0$ such that
	$$ \inf_{G\in \G} \P(A \triangle G) \leq \epsilon. $$
	Then,
	$$d(\G, \sigma(\G \cup \{A\})) \leq \epsilon.$$
\end{lemma}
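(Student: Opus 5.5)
Since $\G \subseteq \sigma(\G\cup\{A\})$, one of the two suprema in $d$ is trivial: every $G_1 \in \G$ has itself as a witness in the larger $\sigma$-algebra, so $\sup_{G_1\in\G}\inf_{G_2\in\sigma(\G\cup\{A\})}\P(G_1\triangle G_2)=0$. It therefore suffices to show that every $E \in \sigma(\G \cup \{A\})$ satisfies $\inf_{G\in\G}\P(E\triangle G)\le\epsilon$. By Lemma~\ref{lemma:adding_set_to_sigma_algebra} we may write $E=(G\cap A)\cup(H\cap A^\complement)$ with $G,H\in\G$.

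Fix $\eta>0$ and pick $G_0\in\G$ with $\P(A\triangle G_0)\le\epsilon+\eta$. As approximating set in $\G$ we take the analogous combination $F:=(G\cap G_0)\cup(H\cap G_0^\complement)\in\G$. The key pointwise observation is that $E$ and $F$ are built by the same rule: on the set where $\ind{A}=\ind{G_0}$, both select $G$ (if the point lies in $A$ and $G_0$) or both select $H$ (if it lies in neither). Hence $E\triangle F\subseteq A\triangle G_0$, and so $\P(E\triangle F)\le\P(A\triangle G_0)\le\epsilon+\eta$.

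Since $\eta$ was arbitrary, $\inf_{F\in\G}\P(E\triangle F)\le\epsilon$ for every $E\in\sigma(\G\cup\{A\})$. Taking the supremum over $E$ bounds the remaining term in $d$ by $\epsilon$. Thus $d(\G,\sigma(\G\cup\{A\}))\le\epsilon$.

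The main step is the set inclusion $E\triangle F\subseteq A\triangle G_0$. It is elementary but needs a case check on whether $\omega$ lies in $A$ and whether it lies in $G_0$. The remaining work is bookkeeping: the infimum in the hypothesis need not be attained, which is why we pass through $\eta$. Alternatively, Lemma~\ref{lemma:smallest_delta} shows that the infimum is attained, so one could take $\eta=0$ directly.
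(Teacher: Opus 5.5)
Your proof is correct and matches the paper's. It uses the same decomposition $E=(G\cap A)\cup(H\cap A^\complement)$ from Lemma~\ref{lemma:adding_set_to_sigma_algebra}, the same approximant obtained by replacing $A$ with a near-minimizing set in $\G$, and the same trivial treatment of the other half of $d$; the only difference is that the paper takes the exact minimizer $\{\P(A|\G)\geq \frac{1}{2}\}$ from Lemma~\ref{lemma:smallest_delta}, whereas your $\eta$-approximate choice makes the argument independent of that lemma.
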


\begin{proof}
	We begin by showing that 
	$$ \sup_{\tilde{G} \in \sigma(\G \cup \{A\})} \inf_{G'\in \G} \P(\tilde{G} \triangle G') \leq \epsilon.$$
	By Lemma~\ref{lemma:adding_set_to_sigma_algebra}, any event $\tilde{G} \in  \sigma(\G \cup \{A\})$ can be expressed as
	$$\tilde{G}=(G \cap A) \cup (H \cap A^\complement),$$
	for some $G,H \in \G.$ Furthermore, we know by Lemma~\ref{lemma:smallest_delta} that the event $B:=\{\P(A|\G)\geq \frac{1}{2}\}$ fulfills
		$$\P(A\triangle B) \leq \epsilon .$$
	We proceed by approximating $\tilde{G}$ by $(G \cap B) \cup (H \cap B^\complement) \in \G$ and find
	\begin{align*}
		&\inf_{G'\in \G} \P(((G \cap A) \cup (H \cap A^\complement) ) \triangle G')  \leq \P( ((G \cap A) \cup (H \cap A^\complement)) \triangle( (G \cap B) \cup (H \cap B^\complement) ) ) \\
		&\leq \P((( G \cap A) \triangle (G \cap B)) \cup ((H \cap A^\complement) \triangle (H \cap B^\complement)))  \\ 
		&\leq \P( (A\triangle B)  \cup (A^\complement \triangle B^\complement)) = \P(A\triangle B) \leq \epsilon.
	\end{align*}
	As $\tilde{G}$ was chosen arbitrarily, we conclude
	$$ \sup_{\tilde{G} \in \sigma(\G \cup \{A\})} \inf_{G'\in \G} \P(\tilde{G} \triangle G') \leq \epsilon.$$
	Finally, to conclude the proof, we note that
	$$\sup_{G'\in \G}\inf_{\tilde{G} \in \sigma(\G \cup \{A\})} \P(G' \triangle \tilde{G}) =0,$$
	as $\G \subseteq \sigma(\G \cup \{A\})$, and thus
	\[ d(\G, \sigma(\G \cup \{A\})) \leq \epsilon.\qedhere\]
\end{proof}

\begin{lemma}\label{lemma:distance_for_sigma_algebras_augmented_by_different_sets}
	Let $\G$ be a $\sigma$-algebra, $A,B$ events. Then, 
	$$d(\sigma(\G \cup \{A\}), \sigma(\G \cup \{B\})) \leq \P(A \triangle B ).$$
\end{lemma}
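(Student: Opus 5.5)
We prove the bound by matching, for each event of one $\sigma$-algebra, a nearby event of the other. By symmetry of the metric $d$, it suffices to show
$$\sup_{\tilde G\in \sigma(\G\cup\{A\})}\ \inf_{G'\in \sigma(\G\cup\{B\})} \P(\tilde G\triangle G')\leq \P(A\triangle B);$$
the other half of the maximum follows by exchanging the roles of $A$ and $B$.

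Fix $\tilde G\in\sigma(\G\cup\{A\})$. By Lemma~\ref{lemma:adding_set_to_sigma_algebra} we can write $\tilde G=(G\cap A)\cup(H\cap A^\complement)$ with $G,H\in\G$. As the competitor we take the event obtained by swapping $A$ for $B$, namely $G':=(G\cap B)\cup(H\cap B^\complement)$. Lemma~\ref{lemma:adding_set_to_sigma_algebra}, now applied with $B$, shows that $G'\in\sigma(\G\cup\{B\})$.

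It remains to bound $\P(\tilde G\triangle G')$, and we do this pointwise. If $\omega\notin A\triangle B$, then $\omega$ lies either in $A\cap B$ or in $A^\complement\cap B^\complement$. In the first case both $\tilde G$ and $G'$ reduce to $G$ at $\omega$; in the second, both reduce to $H$. So $\omega\in\tilde G$ if and only if $\omega\in G'$. Hence $\tilde G\triangle G'\subseteq A\triangle B$, which gives
$$\P(\tilde G\triangle G')\leq \P(A\triangle B).$$
This is the same computation as in the proof of Lemma~\ref{lemma:distance_for_sigma_algebras}, but stated more directly.

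Taking the infimum over $G'$ and then the supremum over $\tilde G$ yields one half of the maximum in the definition of $d$; the symmetric argument gives the other, so $d(\sigma(\G \cup \{A\}), \sigma(\G \cup \{B\})) \leq \P(A \triangle B)$. There is no real obstacle here. The only point needing care is that all $\sigma$-algebras are completed, so Lemma~\ref{lemma:adding_set_to_sigma_algebra} should be read up to null sets, which leaves every probability above unchanged.
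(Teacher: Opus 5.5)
Your proposal is correct and follows the paper's proof: the same representation $\tilde G=(G\cap A)\cup(H\cap A^\complement)$ via Lemma~\ref{lemma:adding_set_to_sigma_algebra}, the same competitor $(G\cap B)\cup(H\cap B^\complement)$, and the same symmetric argument for the other half of the maximum. Your pointwise check that $\tilde G\triangle G'\subseteq A\triangle B$ is a cleaner way to write the paper's chain of set inclusions. The completeness caveat is harmless but unnecessary, since $\sigma(\G\cup\{A\})\supseteq\G$ already contains every null set.
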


\begin{proof}
    We show that any event $A' \in \sigma(\G \cup \{A\})$ can be approximated by an event $B' \in \sigma(\G \cup \{B\})$ so that $\P(A' \triangle B') \leq \P(A \triangle B )$.
    Let $A' \in \sigma(\G \cup \{A\})$. By Lemma~\ref{lemma:adding_set_to_sigma_algebra} we know that there exist events $G, H \in \G$ so that $A'= (G \cap A) \cup (H \cap A^\complement)$. By the same lemma, we also know that the event $B'= (G \cap B) \cup (H \cap B^\complement)$ is an element of $\sigma(\G \cup \{B\})$. We estimate the symmetric difference between $A'$ and $B'$ by
	\begin{align*}
		\P(A' \triangle B')& = \P(( (G \cap A) \cup (H \cap A^\complement)) \triangle ((G \cap B) \cup (H \cap B^\complement))) \\
        &\leq \P(( (G \cap A) \triangle (G \cap B) ) \cup ( (H \cap A^\complement) \triangle (H \cap B^\complement) ) )\\
        & \leq \P((A \triangle B) \cup (A^\complement \triangle B^\complement)) = \P(A\triangle B).
	\end{align*}
    
    As this is true for every $A' \in  \sigma(\G \cup \{A\})$, we find $$ \sup_{A' \in \sigma(\G \cup \{A\})} \inf_{B'\in \sigma(\G \cup \{A\})} \P(A' \triangle B') \leq \P(A \triangle B ).$$

    The same inequality holds true with the roles of $A$ and $B$ reversed. Hence,  $d(\sigma(\G \cup \{A\}), \sigma(\G \cup \{B\})) \leq \P(A \triangle B ).$
\end{proof}

We proceed by establishing that our new notion of approximate common knowledge of a random variable $X$ admits alternative hierarchical characterizations.

\begin{definition}
	Let $\G_1,\G_2$ be $\sigma$-algebras, $X \in L_\infty(\Omega)$ a random variable, $B$ an event, $\P(B)>0$, and $\delta,\epsilon \geq 0$. Then, \emph{$X$ is hierarchically $(\delta, \epsilon)$-common knowledge on $B$} if there exist sequences $(C_n)_{n\in \N}$ and $(D_n)_{n\in \N}$ such that the following conditions hold for all $n \in \N$:
    \begin{itemize}
        \item $K_{\delta,\epsilon}(\G_1, X, C_n)$ and $K_{\delta,\epsilon}(\G_2,X,D_n)$,  
        \item $C_{n+1} \subseteq C_n \cap D_n$ and $D_{n+1} \subseteq C_n \cap D_n$, 
        \item $B=\bigcap_{n\in \N} \left( C_n \cap D_n \right)$.
    \end{itemize}
\end{definition}

\begin{definition}
	Let $\G_1,\G_2$ be $\sigma$-algebras, $X \in L_\infty(\Omega)$ a random variable, $B$ an event, $\P(B)>0$, and $\delta,\epsilon \geq 0$. Then,  \emph{$X$ is alternatingly hierarchically $(\delta,\epsilon)$-common knowledge on $B$} if there exists a sequence of events $(B_n)_{n\in \N}$ such that $B=\bigcap_{n\in \N} B_n$,  $B_i \subseteq B_j$ for $j < i$, and $K_{\delta,\epsilon}(\G_1, X,B_n)$ if $n$ is odd and  $K_{\delta,\epsilon}(\G_2, X,B_n)$ if $n$ is even.
\end{definition}

\begin{prop}\label{prop:equivalences_of_near_common_knowledge_X}
	Let $\G_1,\G_2$ be $\sigma$-algebras, $X \in L_\infty(\Omega)$ a random variable, $B$ an event, and $\delta, \epsilon \geq 0$. Then, the following conditions are equivalent:
	\begin{itemize}
		\item [(1)]   $X$ is $(\delta, \epsilon)$-common knowledge on $B$. 
		\item [(2)]  $X$ is hierarchically $(\delta, \epsilon)$-common knowledge on $B$. 
		\item [(3)]  $X$ is alternatingly hierarchically $(\delta, \epsilon)$-common knowledge on $B$.
	\end{itemize}
\end{prop}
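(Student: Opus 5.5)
The structure mirrors the proof of Proposition~\ref{prop:equivalences_of_near_common_knowledge}. The implications (1)$\Rightarrow$(2) and (2)$\Rightarrow$(3) are purely combinatorial and do not depend on what is being known. For (1)$\Rightarrow$(2) we set $C_n:=D_n:=B$. For (2)$\Rightarrow$(3) we set $B_n:=C_n$ for odd $n$ and $B_n:=D_n$ for even $n$. Monotonicity follows from $C_{n+1},D_{n+1}\subseteq C_n\cap D_n$, and the intersection equals $B$ by the remark following Definition~\ref{def-hierarchical-Lewis}. All the work is in (3)$\Rightarrow$(1). Splitting into odd and even indices, it suffices to prove an analogue of Lemma~\ref{lemma:falling_sequence}: if $(B_n)$ is decreasing with $K_{\delta,\epsilon}(\G,X,B_n)$ for all $n$ and $B=\bigcap_n B_n$, $\P(B)>0$, then $K_{\delta,\epsilon}(\G,X,B)$.

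The part $B\in\G^\delta$ is handled exactly as in the proof of Lemma~\ref{lemma:falling_sequence}, so only the variance condition needs to be shown. We rewrite it on $\Omega$ using the remark after Definition~\ref{def-X-deCK}. Set $\mathcal{H}_n:=\sigma(\G\cup\{B_n\})$ and $\mathcal{H}:=\sigma(\G\cup\{B\})$. Since $B_n\in\mathcal H_n$, the condition $\E_{\P_{B_n}}[\Var_{\P_{B_n}}(X|\G_{B_n})]\le\epsilon$ becomes
$$\E[\ind{B_n}(X-\E[X|\mathcal H_n])^2]\le \epsilon\,\P(B_n),$$
and similarly for $B$ with $\mathcal H$. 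By rescaling we may assume $X$ takes values in $[0,1]$; for general $X\in L_\infty$ the constants in Rogge's bound scale by $\|X\|_\infty$.

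The key step is continuity in $n$. By Lemma~\ref{lemma:distance_for_sigma_algebras_augmented_by_different_sets}, $d(\mathcal H_n,\mathcal H)\le\P(B_n\triangle B)=\P(B_n)-\P(B)\to0$. Lemma~\ref{lemma:rogge} then gives $\E[X|\mathcal H_n]\to\E[X|\mathcal H]$ in $L_2$. We also have $\ind{B_n}\to\ind{B}$ in $L_2$, and all the quantities involved are bounded by $1$. Hence $\ind{B_n}(X-\E[X|\mathcal H_n])^2\to\ind{B}(X-\E[X|\mathcal H])^2$ in $L_1$: expand the difference and apply Cauchy--Schwarz, or use bounded convergence along a subsequence. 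Passing to the limit in the displayed inequality gives $\E[\ind{B}(X-\E[X|\mathcal H])^2]\le\epsilon\P(B)$, which is exactly the claim.

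I expect the main obstacle to be the bookkeeping between the restricted space and $\Omega$. One has to check carefully that the conditional variance on $(B_n,\F_{B_n},\P_{B_n})$ with respect to the trace $\sigma$-algebra corresponds to $\E[\cdot|\mathcal H_n]$ restricted to $B_n$, including the normalisation by $\P(B_n)$, which converges to $\P(B)>0$. The convergence argument itself is routine once Rogge's lemma and Lemma~\ref{lemma:distance_for_sigma_algebras_augmented_by_different_sets} are in place. Alternatively, one could use the inequality~\eqref{align:bedingte_varianz_ist_stetig} directly, writing the quantity as $\E[\ind{B_n}\Var(X|\mathcal H_n)]$ since $B_n\in\mathcal H_n$.
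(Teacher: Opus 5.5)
Your proposal is correct and is essentially the paper's proof: (1)$\Rightarrow$(2)$\Rightarrow$(3) are taken over from Proposition~\ref{prop:equivalences_of_near_common_knowledge}, and for (3)$\Rightarrow$(1) the paper likewise applies Lemma~\ref{lemma:falling_sequence} (with $A=\Omega$) to the odd and even subsequences to get $B\in\G_1^\delta\cap\G_2^\delta$, then combines Lemma~\ref{lemma:distance_for_sigma_algebras_augmented_by_different_sets} with Rogge's Lemma~\ref{lemma:rogge} to pass to the limit in $\E[(X-\E[X|\sigma(\G_i\cup\{B_n\})])^2\ind{B_n}]/\P(B_n)$. Your explicit $L_1$-convergence argument and the rescaling remark fill in the step the paper calls ``easy to see.''
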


\begin{proof}
    The implications $(1) \Rightarrow (2)$ and  $(2)  \Rightarrow (3)$ can be shown with the  same  proof as in Proposition~\ref{prop:equivalences_of_near_common_knowledge}.
    
    What is left to show is that $X$ being alternatingly hierarchically $(\delta, \epsilon)$-common knowledge on $B$ implies  $X$ being $(\delta, \epsilon)$-common knowledge on $B$. Let $(B_n)_{n\in \N}$ be a sequence of events as in the definition of alternating hierarchical $(\delta,\epsilon)$-common knowledge. Applying Lemma~\ref{lemma:falling_sequence} to the sequence $(B_{2n+1})_{n\in \N}$ and $A:=\Omega$, we see that $B \in \G_1^\delta$. The same argument applied to the sequence $(B_{2n})_{n\in \N}$ gives $B \in \G_2^\delta$ and thus $B \in \G_1^\delta \cap \G_2^\delta.$    

It remains to be shown that $\E_{\P_B}[\Var_{\P_B}(X_{|B}|\G_{i_{B}})] \leq \epsilon$ for $i \in \{1,2\}$. As $\lim_{n\to \infty} \P(B_n \triangle B) =0$, Lemma~\ref{lemma:rogge} together with Lemma~\ref{lemma:distance_for_sigma_algebras_augmented_by_different_sets} implies that
$$
\lim_{n\to \infty} \E[X | \sigma(\G_i \cup \{B_n\})]= \E[X|\sigma(\G_i \cup \{B\})]
$$
in $L_2(\Omega)$. From this, it is easy to see that also
\begin{align*}
&\lim_{n \to \infty} \E_{\P_{B_{n}}}[\Var_{\P_{B_{n}}}(X_{|B_{n}}|\G_{i_{B_n}})]=\lim_{n \to \infty} \frac{\E[(X-\E[X|\sigma(\G_i \cup \{B_n\})])^2\ind{B_n}]}{\P(B_n)}
\\
&=\frac{\E[(X-\E[X|\sigma(\G_i \cup \{B\})])^2\ind{B}]}{\P(B)}= \E_{\P_{B}}[\Var_{\P_{B}}(X_{|B}|\G_{i_{B}})]
\end{align*}
in $L_2(\Omega)$.
This concludes the proof as 
 $$ \E_{\P_B}[\Var_{\P_B}(X_{|B}|\G_{1_{B}})] = \lim_{n\to \infty} \E_{\P_{B_{2n+1}}}[\Var_{\P_{B_{2n+1}}}(X_{|B_{2n+1}}|\G_{1_{B_{2n+1}}})] \leq \epsilon  $$
    as well as
     \[ \E_{\P_B}[\Var_{\P_B}(X_{|B}|\G_{2_{B}})] = \lim_{n\to \infty} \E_{\P_{B_{2n}}}[\Var_{\P_{B_{2n}}}(X_{|B_{2n}}|\G_{2_{B_{2n}}})] \leq \epsilon. \qedhere \]
\end{proof}

We are ready to prove the main theorem of this section.

\begin{theorem}\label{theo:localaumann}
	Let $\G_1,\G_2$ be $\sigma$-algebras, $X\in  L_\infty(\Omega)$,  $B$ an event, $\P(B)>0$, and $\delta,\epsilon \geq 0$ such that $K_{\delta,\epsilon}(\E_{\P_B}[X|\G_1], B)$ and $K_{\delta,\epsilon}(\E_{\P_B}[X|\G_2], B).$ Then, the following inequality holds
	$$\E_{\P_B}[(\E[X|\G_1]-\E[X|\G_2])^2] \leq  64 \delta ||X||_{L_\infty(\P)}^2 +4 \sqrt{\epsilon} ||X||_{L_\infty(\P)}.$$
\end{theorem}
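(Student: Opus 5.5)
The plan is to split the disagreement into a \emph{local} part, handled by Theorem~\ref{theo:globalaumann} on the restricted space, and a \emph{localisation error}, handled by Rogge's Lemma~\ref{lemma:rogge}. Write $M:=||X||_{L_\infty(\P)}$ and $\tilde\G_i:=\sigma(\G_i\cup\{B\})$. By the remark following Definition~\ref{def-X-deCK}, on $B$ the local posterior $Y_i:=\E_{\P_B}[X_{|B}|\G_{i_B}]$ coincides with $\E[X|\tilde\G_i]$. On $B$ I decompose
$$\E[X|\G_1]-\E[X|\G_2]=\bigl(\E[X|\G_1]-Y_1\bigr)+\bigl(Y_1-Y_2\bigr)+\bigl(Y_2-\E[X|\G_2]\bigr).$$
Applying $(u+v)^2\le 2u^2+2v^2$ twice, with $u$ the middle term and $v$ the sum of the two outer terms, gives
$$\E_{\P_B}[(\E[X|\G_1]-\E[X|\G_2])^2]\le 2\,\E_{\P_B}[(Y_1-Y_2)^2]+4\sum_{i=1}^2\E_{\P_B}[(\E[X|\G_i]-Y_i)^2].$$

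\textbf{Middle term.} Work on the probability space $(B,\F_B,\P_B)$ with the $\sigma$-algebras $\G_{1_B},\G_{2_B}$. The hypothesis $K_{\delta,\epsilon}(Y_j,B)$ says that $\E_{\P_B}[\Var_{\P_B}(Y_j|\G_{i_B})]\le\epsilon$ for $i,j\in\{1,2\}$. In other words, both posteriors are $\epsilon$-common information on the restricted space. Theorem~\ref{theo:globalaumann}, applied there, yields
$$\E_{\P_B}[(Y_1-Y_2)^2]\le 2\sqrt{\epsilon\Var_{\P_B}(X)}\le 2\sqrt{\epsilon}\,M.$$
After multiplying by $2$, this produces the term $4\sqrt\epsilon\,M$.

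\textbf{Outer terms.} The hypothesis also gives $B\in\G_i^\delta$, so $\inf_{G\in\G_i}\P(B\triangle G)\le\delta\P(B)$. Lemma~\ref{lemma:distance_for_sigma_algebras} then gives $d(\G_i,\tilde\G_i)\le\delta\P(B)$. To apply Lemma~\ref{lemma:rogge}, rescale to $X':=(X+M)/(2M)\in[0,1]$. Since conditional expectations are affine, this yields
$$||\E[X|\G_i]-\E[X|\tilde\G_i]||_{L_2(\P)}^2\le (2M)^2\cdot 2\delta\P(B)=8\delta\P(B)M^2.$$
Restricting the integral to $B$ and dividing by $\P(B)$ gives $\E_{\P_B}[(\E[X|\G_i]-Y_i)^2]\le 8\delta M^2$. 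Summing over $i$ and multiplying by $4$ gives exactly $64\delta M^2$, which matches the stated constant.

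\textbf{Main obstacle.} The hardest step is the bookkeeping between the ambient space and the restricted space. One must check that the identification $Y_i=\E[X|\tilde\G_i]$ on $B$ is what the hypothesis $K_{\delta,\epsilon}(\E_{\P_B}[X|\G_i],B)$ refers to. One must also check that the conditional variances in that hypothesis, computed on $(B,\F_B,\P_B)$, are precisely what Theorem~\ref{theo:globalaumann} needs on that space. The remaining concern is to make sure the $\P(B)$ factor from Lemma~\ref{lemma:distance_for_sigma_algebras} cancels correctly when passing from $L_2(\P)$ to $L_2(\P_B)$. Rogge's bound is stated with $d(1-d)$, and I simply drop the factor $1-d\le 1$.
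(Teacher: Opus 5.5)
Your proof is correct and follows essentially the same route as the paper: the same three-term decomposition on $B$ through $\E[X|\sigma(\G_i\cup\{B\})]$, Rogge's lemma combined with Lemma~\ref{lemma:distance_for_sigma_algebras} for the two outer terms (each contributing $8\delta\|X\|_{L_\infty(\P)}^2$), and Theorem~\ref{theo:globalaumann} on $(B,\F_B,\P_B)$ for the middle term. The differences are cosmetic and give the same constants: you combine the pieces pointwise via $(u+v)^2\le 2u^2+2v^2$, whereas the paper uses the triangle inequality in $L_2(\P_B)$ and then squares; and you make explicit the rescaling to $[0,1]$ that Rogge's lemma requires, which the paper leaves implicit.
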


\begin{proof}
	Using the triangle inequality in $L_2(\P_B)$, we find
	
	\begin{align*}
		& ||\E[X|\G_1]-\E[X|\G_2]||_{L_2(\P_B)} \leq ||\E[X|\G_1]-\E_{\P_B}[X|\G_1]||_{L_2(\P_B)} \\
		&+  ||\E_{P_B}[X|\G_1]-\E_{P_B}[X|\G_2]||_{L_2(\P_B)} +||\E[X|\G_2]-\E_{\P_B}[X|\G_2]||_{L_2(\P_B)}.
	\end{align*}
	We first bound the first and the third term. For $i \in \{1,2\}$,
	\begin{align*}
		&||\E[X|\G_i]-\E_{\P_B}[X|\G_i]||_{L_2(\P_B)} = \frac{1}{\sqrt{\P(B)}} ||(\E[X|\G_i]-\E_{\P_B}[X|\G_i]) \ind{B}||_{L_2(\P)} \\ 
		&=\frac{1}{\sqrt{\P(B)}} ||(\E[X|\G_i]-\E[X|\sigma(\G_i \cup \{B\})] ) \ind{B}||_{L_2(\P)} \\
		&\leq \frac{1}{\sqrt{\P(B)}} ||\E[X|\G_i]-\E[X|\sigma(\G_i \cup \{B\})]  ||_{L_2(\P)} \\
		&\leq \frac{2||X||_{L_\infty(\P)}}{\sqrt{\P(B)}} \sqrt{2\delta \P(B)}= 2\sqrt{2\delta}||X||_{L_\infty(\P)} ,
	\end{align*}
	where the last inequality is due to Lemma~\ref{lemma:rogge} and Lemma~\ref{lemma:distance_for_sigma_algebras}.
	
	For the second term, we use that, by assumption,  $K_{\delta,\epsilon}(\E_{\P_B}[X|\G_1], B)$ and $K_{\delta,\epsilon}(\E_{\P_B}[X|\G_2], B)$, and so we can apply Theorem~\ref{theo:globalaumann}, and get
	$$||\E_{P_B}[X|\G_1]-\E_{P_B}[X|\G_2]||_{L_2(\P_B)} \leq \sqrt{2\sqrt{\epsilon} ||X||_{L_2(\P_B)}} \leq \sqrt{2\sqrt{\epsilon} ||X||_{L_\infty(\P)}}. $$
	
	What is left to do is to bring the estimates above together. For this, we use the fact that $(a+b)^2 \leq 2(a^2+b^2)$ for all real numbers $a,b$, and find
	\begin{align*}
	&\E_{\P_B}[(\E[X|\G_1]-\E[X|\G_2])^2] \leq \left(4\sqrt{2\delta} ||X||_{L_\infty(\P)} +\sqrt{2\sqrt{\epsilon} ||X||_{L_\infty(\P)}}\right)^2 \\
	&\leq 64\delta ||X||_{L_\infty(\P)}^2 + 4 \sqrt{\epsilon}||X||_{L_\infty(\P)}.\qedhere
	\end{align*} 
\end{proof}

Theorem~\ref{theo:localaumann} is a local version of Theorem~\ref{theo:globalaumann} in the following sense: Let both individuals pick an event $B \in \G_1^\delta \cap \G_2^\delta$ of which they assume that it occurs, and let them calculate their posteriors of a random variable $X$ given $B$. If these posteriors, while assuming $B$, are $\epsilon$-common information, then the $L_2$-distance on $B$ of the true posteriors (without assuming $B$ as given) is bounded as a function of $\delta$ and $\epsilon$.

\boldmath
\section{Dialogues and ``almost'' convergence of posteriors}
\unboldmath

An Aumann-type result for dynamic settings, in which individuals continue to exchange information of their posteriors of a given random variable $X$, is given by Geanakoplos and Polemarchakis \cite{GePo82} for discrete probability spaces. Nielsen \cite{Ni84} generalizes this result to arbitrary probability spaces.

\subsection{Bayesian Dialogues over a random variable}

 To model the passing of time and the individuals' ability to learn new information, following Nielsen, we associate each individual $i$ with a complete filtration $(\G^i_n)_{n\in \N}$. Furthermore, we set $\G^i_\infty:= \sigma(\bigcup_{i=1}^\infty \G^i_n)$.

\begin{theorem}(Nielsen)\label{theo:dialog_nielsen}
		Let $X\in L_1(\Omega)$ and filtrations $(\G_n^1)_{n \in \N}$ and $(\G_n^2)_{n\in \N}$ be given, such that for each $n\in \N$, there exists an $m>n$ such that $\E[X|\G^1_n]$ and  $\E[X|\G^2_n]$  are common information at time $m$. Then, $$\lim_n \E[X|\G_n^1] = \lim_n \E[X|\G_n^2]  \, a.s.$$
\end{theorem}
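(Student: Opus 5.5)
We will use martingale convergence together with Nielsen's static agreement theorem (Theorem~\ref{theo:aumann_nielsen}), applied in the limit $\sigma$-algebras. First, since $X\in L_1(\Omega)$ and each $(\G_n^i)_{n\in\N}$ is a filtration, Lévy's upward theorem gives
\[
Y^i_n:=\E[X|\G^i_n]\to Y^i_\infty:=\E[X|\G^i_\infty]
\]
almost surely and in $L_1$, for $i\in\{1,2\}$. The claim therefore reduces to showing $Y^1_\infty=Y^2_\infty$ almost surely.

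The key step is to show that $Y^1_\infty$ is $\G^2_\infty$-measurable, and symmetrically that $Y^2_\infty$ is $\G^1_\infty$-measurable. By hypothesis, for each $n$ there is $m>n$ such that $Y^1_n$ is common information at time $m$. In particular $K(\G^1_m\cap\G^2_m,Y^1_n)=\Omega$, so $Y^1_n$ is $\G^1_m\cap\G^2_m$-measurable, hence $\G^2_m$-measurable and therefore $\G^2_\infty$-measurable. Since $Y^1_\infty$ is an almost sure limit of $\G^2_\infty$-measurable variables and $\G^2_\infty$ is complete, $Y^1_\infty$ is $\G^2_\infty$-measurable. The same argument gives that $Y^2_\infty$ is $\G^1_\infty$-measurable. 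Both variables are also measurable with respect to their own limit $\sigma$-algebra, so both are $\G^1_\infty\cap\G^2_\infty$-measurable.

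Next we conclude via the tower property. Set $\mathcal H:=\G^1_\infty\cap\G^2_\infty$. Since $Y^1_\infty$ is $\mathcal H$-measurable,
\[
Y^1_\infty=\E[Y^1_\infty|\mathcal H]=\E[\E[X|\G^1_\infty]|\mathcal H]=\E[X|\mathcal H],
\]
and likewise $Y^2_\infty=\E[X|\mathcal H]$. Hence $Y^1_\infty=Y^2_\infty$ almost surely. This is precisely Theorem~\ref{theo:aumann_nielsen} applied to $\G^1_\infty,\G^2_\infty$ with $K=\Omega$.

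The main obstacle is the measurability transfer to the limit. We must check that "common information at time $m$" means measurability with respect to $\G^1_m\cap\G^2_m$ up to null sets. We also need completeness of $\G^2_\infty$, which may require replacing it by its completion; the conditional expectation is unchanged by this. Finally, almost sure limits of measurable functions must be modified on a null set, and completeness handles that modification.
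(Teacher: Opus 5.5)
Your proof is correct, but it takes a different route from the paper. The paper does not prove this statement; it is quoted from Nielsen. The paper only recovers it for $X\in L_2(\Omega)$, as the $\epsilon=0$ case of Theorem~\ref{theo:dialoge_epsilon} and its corollary, using the observation that Nielsen's hypothesis gives $\E[\Var(\E[X|\G^i_n]|\G^j_m)]=0$ for all large $m$.

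\textbf{The paper's route.} It passes the conditional-variance condition to the limit using $L_2$ martingale convergence and Jensen's inequality, which yields $\E[\Var(\E[X|\G^i_\infty]|\G^j_\infty)]=0$. It then concludes with the Hilbert-space estimate of Theorem~\ref{theo:globalaumann} (self-adjointness of conditional expectation and Cauchy--Schwarz, Lemma~\ref{lemma:hilbert}). A final appeal to almost sure martingale convergence gives the pointwise statement.

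\textbf{Your route.} You transfer exact measurability to the limit through L\'evy's almost sure convergence. You then close with the tower property on $\G^1_\infty\cap\G^2_\infty$, which amounts to Theorem~\ref{theo:aumann_nielsen} applied to the limit $\sigma$-algebras.

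\textbf{Trade-off.} Your argument is shorter and purely qualitative, and it covers the theorem as actually stated, for $X\in L_1(\Omega)$, where the relevant variances need not be finite. The paper's argument requires square integrability, but it is stable under perturbation. It gives the bound $2\sqrt{\epsilon\Var(X)}$ when each individual only approximately learns the other's posterior, as in the noisy channel of Example~\ref{ex:com_noise}, which a measurability argument cannot deliver.

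\textbf{Two small remarks.}
\begin{enumerate}
\item Replacing $\G^2_\infty$ by its completion is unnecessary. It contains the complete $\sigma$-algebras $\G^2_n$, hence all null sets, so the almost sure limit is already $\G^2_\infty$-measurable.
\item Your argument only uses that infinitely many $\E[X|\G^1_n]$ are $\G^2_\infty$-measurable, and symmetrically for individual 2. This is a weaker hypothesis than the one stated.
\end{enumerate}
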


In other words, if at each point in time $n$, for each individual $i$, the other individual will, at some later point in time,  $m>n$, know all information pertaining to $X$ that $i$ knew at time $n$, then their posteriors converge to the same value, almost surely.

The condition that each individual eventually learns the posterior of the other can be expressed in terms of conditional variances, namely, that for $i,j \in \{1,2\}$ and $n \in \N$, there exists $N_n \in \N$ such that for all $m >N_n$, we have $\E[\Var(\E[X|\G^i_n]|\G^j_m)]=0.$

The following theorem uses this condition to establish a relaxed version of Nielsen's result: If each individual approximately knows the posterior of the other individual in the limit as time progresses, then their posteriors at infinity are approximately the same.

\begin{theorem}\label{theo:dialoge_epsilon}
	Let $X \in L_2(\Omega)$ and filtrations $(\G^1_n)_{n \in \N}$ and $(\G^2_n)_{n \in \N}$ be given. Assume that for all $i,j \in \{1,2\}$
	\begin{align}\label{align:dialoge_epsilon}
	\lim_{n\to\infty} \lim_{m\to\infty} \E[ \Var(\E[X |\G^i_n]|\G^j_m)] \le \epsilon.
	\end{align}
	Then,
	$$  \E[( \E[X|\G^1_\infty] -  \E[X|  \G^2_\infty] )^2] \le  2\sqrt{\epsilon \Var(X)}.$$
\end{theorem}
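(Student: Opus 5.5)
We will reduce the statement to the argument of Theorem~\ref{theo:globalaumann} applied at time $\infty$. Write $Y_i:=\E[X|\G^i_\infty]$ and assume without loss of generality $\E[X]=0$. By Lemma~\ref{lemma:hilbert} with $\G_1=\G^1_\infty$, $\G_2=\G^2_\infty$ and the triangle inequality in $L_2(\Omega)$, we have
\begin{align*}
\E[(Y_1-Y_2)^2] \le \left( \|Y_1-\E[Y_1|\G^2_\infty]\|_2 + \|Y_2-\E[Y_2|\G^1_\infty]\|_2\right)\sqrt{\Var(X)},
\end{align*}
and $\|Y_i-\E[Y_i|\G^j_\infty]\|_2^2=\E[\Var(Y_i|\G^j_\infty)]$. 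So it suffices to show $\E[\Var(Y_i|\G^j_\infty)]\le\epsilon$ for $i\neq j$.

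To do this, we pass to the limit twice using martingale convergence in $L_2$. First fix $n$. Since $(\G^j_m)_m$ is a filtration with $\G^j_\infty=\sigma(\bigcup_m\G^j_m)$, Lévy's upward theorem gives $\E[Z|\G^j_m]\to\E[Z|\G^j_\infty]$ in $L_2$ for $Z:=\E[X|\G^i_n]\in L_2$. Since
\[
\E[\Var(Z|\G)]=\|Z\|_2^2-\|\E[Z|\G]\|_2^2,
\]
we obtain
\[
\lim_m \E[\Var(\E[X|\G^i_n]|\G^j_m)]=\E[\Var(\E[X|\G^i_n]|\G^j_\infty)].
\]
Second, let $n\to\infty$. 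Again by the upward theorem, $\E[X|\G^i_n]\to Y_i$ in $L_2$. The map $Z\mapsto \E[\Var(Z|\G^j_\infty)]^{1/2}=\|Z-\E[Z|\G^j_\infty]\|_2$ is $1$-Lipschitz in $L_2$, because $I-\E[\cdot|\G]$ is an orthogonal projection. Hence
\[
\E[\Var(Y_i|\G^j_\infty)]=\lim_n\lim_m \E[\Var(\E[X|\G^i_n]|\G^j_m)]\le\epsilon
\]
by assumption~\eqref{align:dialoge_epsilon}. This also shows that the iterated limit in the hypothesis exists; in any case the expression is monotone in $m$, since $\G^j_m$ grows, so the inner limit exists.

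Plugging these bounds in yields $\E[(Y_1-Y_2)^2]\le 2\sqrt{\epsilon}\sqrt{\Var(X)}$. Equivalently, the posteriors $Y_1,Y_2$ are $\epsilon$-common information for $(\G^1_\infty,\G^2_\infty)$ in the sense of Definition~\ref{def-X-CK}, up to the trivial case $i=j$ where the variance vanishes, so Theorem~\ref{theo:globalaumann} applies directly.

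The main obstacle is justifying the interchange of the two limits with the conditional variance. This is handled by the $L_2$ martingale convergence together with the projection (Lipschitz) property. The only subtlety is that the inner limit identifies the conditioning on $\G^j_\infty$ correctly, which is exactly Lévy's upward theorem for the completed generated $\sigma$-algebra.
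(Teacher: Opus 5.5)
Your proof is correct and follows essentially the same route as the paper: you identify $\lim_n\lim_m \E[\Var(\E[X|\G^i_n]|\G^j_m)]$ with $\E[\Var(\E[X|\G^i_\infty]|\G^j_\infty)]$ via $L_2$ martingale convergence, and then run the argument of Theorem~\ref{theo:globalaumann} (Lemma~\ref{lemma:hilbert} plus the triangle inequality) for $\G^1_\infty,\G^2_\infty$. The differences are cosmetic: for the outer limit you use that $Z\mapsto\|Z-\E[Z|\G^j_\infty]\|_2$ is $1$-Lipschitz where the paper uses conditional Jensen, and you inline the proof of Theorem~\ref{theo:globalaumann} rather than citing it.
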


\begin{proof}
Let $i,j \in \{1,2\}$. 
By Doob's martingale convergence theorem,
$$\lim_{m\to\infty} \E[ \E[X|\G^i_n]|\G^j_m] =  \E[ \E[X|\G^i_n]|\G^j_\infty] $$
 in $L_2(\Omega)$, for all $n \in \N$, and hence,
 \begin{align}\label{equ:dialoge_eps1}
 	\begin{split}
 	\lim_{m\to\infty}  \E[ \Var(\E[X |\G^i_n]|\G^j_m)]  &= \lim_{m\to\infty} || \E[X|\G^i_n] - \E[ \E[X|\G^i_n]|\G^j_m] ||^2_2 \\ &= || \E[X|\G^i_n] - \E[ \E[X|\G^i_n]|\G^j_\infty] ||^2_2.
 \end{split}
 \end{align}

Applying Doob's martingale inequality once more to $\E[X|\G^i_n]$, we see
\begin{align}\label{equ:dialoge_eps2}
	\lim_{n\to\infty} \E[X|\G^i_n] = \E[X|\G^i_\infty]
\end{align}
in $L_2(\Omega)$. This, together with Jensen's inequality (see, for instance, Williams \cite[p.\ 88]{Wi91}), implies that
\begin{align*}
	\begin{split}
	&\lim_{n\to\infty} || \E[ \E[X|\G^i_n]|\G^j_\infty] -  \E[ \E[X|\G^i_\infty]|\G^j_\infty] ||^2_2 = \lim_{n\to\infty} \E[ (\E[\E[X|\G^i_n] - \E[X|\G^i_\infty] |\G^j_\infty])^2] \\ \leq & \lim_{n\to\infty} \E[ \E[(\E[X|\G^i_n] - \E[X|\G^i_\infty])^2 |\G^j_\infty]]  = \lim_{n\to\infty} ||\E[X|\G^i_n] - \E[X|\G^i_\infty] ||^2_2 =0,
	\end{split}
\end{align*}
and hence
\begin{align}\label{equ:dialoge_eps3}
\lim_{n\to\infty}  \E[ \E[X|\G^i_n]|\G^j_\infty] =  \E[ \E[X|\G^i_\infty]|\G^j_\infty]
\end{align}
in $L_2(\Omega)$.
Next, we use inequality~\eqref{align:dialoge_epsilon}, together with equations~\eqref{equ:dialoge_eps1}, \eqref{equ:dialoge_eps2} and~\eqref{equ:dialoge_eps3} to calculate
\begin{align*}
	\epsilon \geq &\lim_{n\to\infty} \lim_{m\to\infty}  \E[ \Var(\E[X |\G^i_n]|\G^j_m)] = \lim_{n\to\infty} || \E[X|\G^i_n] - \E[ \E[X|\G^i_n]|\G^j_\infty] ||^2_2 \\ =& || \E[X|\G^i_\infty] - \E[ \E[X|\G^i_\infty]|\G^j_\infty] ||^2_2 = \E[\Var(\E[X|\G^i_\infty]|\G^j_\infty)] ,
\end{align*}
which means that $\E[X|\G^i_\infty]$ is $\epsilon$-common information for all $i \in \{1,2\}$. Applying Theorem~\ref{theo:globalaumann} to $\E[X|\G^1_\infty]$ and $\E[X|\G^2_\infty]$ proves that
\[ \E[( \E[X|\G^1_\infty] -  \E[X|\G^2_\infty] )^2] \le  2\sqrt{\epsilon \Var{(X)}}. \qedhere \]
\end{proof}

If $\epsilon=0$, we also get that the posteriors converge a.s.\ to each other.

\begin{corollary}
			Let $X \in L_2(\Omega)$ and $(\G^1_n)_{n\in \N}, (\G^2_n)_{n\in \N}$ be filtrations. Furthermore, assume that 
	$$
	\lim_n \lim_m \E[\Var(\E[X |\G^1_n]|\G^2_m)] =0	\ \text{and } \	\lim_n \lim_m \E[\Var(\E[X |\G^2_n]|\G^1_m)] =0.
	$$
	Then, $$\lim_n\E[X|\G^1_n]= \lim_n\E[X|\G^2_n] \, a.s.$$

\end{corollary}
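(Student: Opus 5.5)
This corollary follows from Theorem~\ref{theo:dialoge_epsilon} together with martingale convergence. Assume the two double limits vanish. Then hypothesis~\eqref{align:dialoge_epsilon} holds for every $\epsilon>0$: for $i\neq j$ it is given directly. For $i=j$ we have $\E[\Var(\E[X|\G^i_n]|\G^i_m)]=0$ for $m\geq n$, since $\E[X|\G^i_n]$ is $\G^i_m$-measurable by the filtration property.

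Applying Theorem~\ref{theo:dialoge_epsilon} with each $\epsilon>0$ gives
$$\E[(\E[X|\G^1_\infty]-\E[X|\G^2_\infty])^2]\le 2\sqrt{\epsilon\Var(X)}.$$
Letting $\epsilon\downarrow 0$, the left-hand side is zero, so $\E[X|\G^1_\infty]=\E[X|\G^2_\infty]$ almost surely. Equivalently, one can plug $\epsilon=0$ directly into the theorem.

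It remains to identify $\E[X|\G^i_\infty]$ with $\lim_n \E[X|\G^i_n]$ in the almost sure sense. The sequence $(\E[X|\G^i_n])_n$ is a martingale bounded in $L_2$, since $X\in L_2(\Omega)\subseteq L_1(\Omega)$. By Lévy's upward theorem (Doob's martingale convergence theorem), it converges almost surely and in $L_2$ to $\E[X|\G^i_\infty]$, where $\G^i_\infty=\sigma(\bigcup_n\G^i_n)$. Combining the two facts gives $\lim_n\E[X|\G^1_n]=\lim_n\E[X|\G^2_n]$ almost surely.

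No real obstacle is expected. The only care points are that the theorem yields an $L_2$ statement, which must be converted into the pointwise a.s.\ limit via Lévy's theorem, and that the $i=j$ cases of hypothesis~\eqref{align:dialoge_epsilon} need the short check given above.
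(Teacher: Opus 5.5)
Your proposal is correct and follows the paper's proof: apply Theorem~\ref{theo:dialoge_epsilon} with $\epsilon=0$ to get $\E[X|\G^1_\infty]=\E[X|\G^2_\infty]$ a.s., then use Doob's (L\'evy's upward) martingale convergence theorem to identify these with the almost sure limits $\lim_n \E[X|\G^i_n]$. You also check explicitly that the $i=j$ cases of \eqref{align:dialoge_epsilon} hold trivially by the filtration property, a detail the paper leaves implicit.
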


\begin{proof}
By Theorem~\ref{theo:dialoge_epsilon}, 
$$\E[( \E[X|\G^1_\infty]-  \E[X|\G^2_\infty]  )^2]=0,$$
hence 
 $  \E[X|\G^1_\infty] = \E[X|\G^2_\infty]$ a.s.
 By Doob's martingale convergence theorem, we know that
 $$\lim_{n\to\infty} \E[X|\G^1_n] = \E[X|\G^1_\infty] \, a.s.$$
Similarly for $(\G^2_n)_{n\in \N}$, and thus
 \[ \lim_{n\to\infty} \E[X|\G^1_n] = \E[X|\G^1_\infty] = \E[X|\G^2_\infty] = \lim_{n\to\infty} \E[X|\G^2_n] \, a.s. \qedhere \] 
\end{proof}

\subsection{Communication with noise}

Theorem~\ref{theo:dialoge_epsilon} has the following important application.

\begin{example}\label{ex:com_noise}
Consider communication between two individuals, individual 1 and individual 2, through a noisy channel, where the individuals tell each other their current posteriors of a given random variable $X$. The noise in the channel is given by the random variables $(\eta^1_n)_{n\in\N}$ and $(\eta^2_n)_{n\in\N}$ with a uniform bound on their variances, $\Var(\eta^i_n) \leq \epsilon$ for $i \in \{1,2\}$, $n \in \N$ and some $\epsilon > 0$.

The individuals start with the $\sigma$-algebras $\G^1_0,\G^2_0$, encoding their knowledge before communication starts, and update their knowledge as follows:
$$
\G^1_{n+1} = \G_n^1 \vee \sigma ( \E[X|\G^2_n] + \eta_n^2 ),
$$	
and analogously for individual 2. 
For natural numbers $n \leq m $, we calculate
	\begin{align*}
\E[ \Var( \E[X | \G^i_n] | \G^j_{m} )] &\leq \E[ \Var( \E[X | \G^i_n] | \G^j_{n+1} )] \le \E [ \Var( \E[X | \G^i_n] | \E[X|\G^i_n] 
 + \eta_n^i )] \\ &=\E[ \Var(\eta_n^i | \E[X|\G^i_n] + \eta_n^i )] \le \Var(\eta_n^i) \le \epsilon.
	\end{align*}
Applying Theorem~\ref{theo:dialoge_epsilon}, we see that the difference in the individuals' posteriors at infinity is controlled by $\epsilon$.
\end{example}

\begin{remark}
	 In this example, we did not need to assume anything about the shape of the noise, only that its variance is uniformly bounded.
\end{remark}

\vspace{1em}

\noindent
{\bf Acknowledgements.}
We are grateful to Mathias Beiglböck for his comments throughout this work and for initiating the collaboration that resulted in this paper. We also benefited from discussions with Michael Greinecker. C.P. gratefully acknowledges the hospitality of the Faculty of Mathematics at the University of Vienna.

This research was funded in whole or in part by the Austrian Science
Fund (FWF)  10.55776/ P34743 and 10.55776/J4981 as well as the Austrian National Bank
[Jubiläumsfond, project 18983]. For open access purposes, the author has
applied a CC BY public copyright license to any author accepted
manuscript version arising from this submission.

\bibliographystyle{abbrv} 

\bibliography{joint_biblio}
\end{document}